\documentclass[journal, twocolumn]{IEEEtran}

\IEEEoverridecommandlockouts

\usepackage[utf8]{inputenc}
\usepackage[T1]{fontenc}

\usepackage{amsmath}        
\usepackage{amssymb}        
\usepackage{amsfonts}       
\usepackage{amsthm}         

\usepackage[scr=rsfs]{mathalpha}

\usepackage{graphicx}
\usepackage{xcolor}
\usepackage{textcomp}
\usepackage{subfigure}
\usepackage{caption}
\usepackage{threeparttable}
\usepackage{multirow}
\usepackage{array}

\usepackage{diagbox} 
\usepackage{footnote}

\usepackage[ruled,linesnumbered]{algorithm2e}

\usepackage{cite}
\usepackage{url}

\usepackage{setspace}
\usepackage{footnote}
\usepackage{ifthen}

\usepackage{mathtools}

\newtheorem{theorem}{Theorem}  
\newtheorem{lemma}{Lemma}      

\newtheorem{proposition}{Proposition}

\newtheorem{remark}{Remark}

\newtheorem{assumption}{Assumption}

\begin{document}

\title{
Enabling Safety-Critical Wireless Communications via Safe Reinforcement Learning
}
\IEEEaftertitletext{\vspace{-2.2\baselineskip}}
\author{ \IEEEauthorblockN{Haoran Peng,~\IEEEmembership{Member,~IEEE},~Tong Wu,~\IEEEmembership{Member,~IEEE},~Hang Liu,~\IEEEmembership{Member,~IEEE},~Weijia Zheng,~\IEEEmembership{Graduate Student Member,~IEEE},~Ying-Jun Angela Zhang,~\IEEEmembership{Fellow,~IEEE} and~Anna Scaglione,~\IEEEmembership{Fellow,~IEEE}}
\thanks{
Haoran Peng, Weijia Zheng, and Ying-Jun Angela Zhang are with the Department of Information Engineering, The Chinese University of Hong Kong, Shatin, N.T., Hong Kong S.A.R. (email:~\{hrpeng, zw023, yjzhang\}@ie.cuhk.edu.hk). 
Tong Wu is with the Department of Electrical and Computer Engineering, University of Central Florida, Orlando, FL USA (email:~tong.wu@ucf.edu). Hang Liu is with the State Key Laboratory of Internet of Things for Smart City and the Department of Electrical and Computer Engineering, University of Macau, Macao S.A.R. (email: hangliu@um.edu.mo). Anna Scaglione is with the Department of Electrical and Computer Engineering, Cornell Tech, Cornell University, NY 10044 USA (email: as337@cornell.edu).
}
}

\markboth{}%
{Shell \MakeLowercase{\textit{et al.}}: Bare Demo of IEEEtran.cls for IEEE Communications Society Journals}
\maketitle

\begin{abstract}
Ensuring strict safety guarantees is the paramount challenge for emerging 5G/6G wireless systems, particularly as they increasingly govern mission-critical applications ranging from autonomous UAV swarms to industrial automation.
While deep reinforcement learning (DRL) offers a promising solution for complex resource allocation, standard algorithms frequently violate essential  constraints, such as QoS mandates and power limits, posing unacceptable risks of system failure and regulatory non-compliance.
We propose Safe-Deep Q-Learning, a novel algorithm that simultaneously addresses all three challenges: it handles mixed-integer nonconvex problems by approximating the Q-function, adapts to stochastic dynamics, and enforces dual-timescale constraints using integrated Lagrangian methods. 
Our framework features adaptive penalty scaling and constraint violation tracking, specifically tailored for wireless environments, and is designed to operate in both distributed and centralized architectural modes. 
We prove convergence to optimal constraint-satisfying policies under mild conditions and demonstrate robustness through dual variable stabilization.
Validation on unmanned aerial vehicle (UAV) swarm control network and post-disaster emergency communications applications shows that Safe-Deep Q-Learning achieves stringent adherence to safety bounds with near-zero violation rates, significantly outperforming existing constrained RL baselines, establishing its effectiveness for safety-critical wireless deployments.
\end{abstract}

\begin{IEEEkeywords}
Safety-critical wireless communications, constrained reinforcement learning, safe learning, deep Q-learning, unmanned aerial vehicles.
\end{IEEEkeywords}

\section{Introduction}
\label{Sec: Intro}
The rapid evolution towards Beyond 5G (B5G) and 6G ecosystems heralds a paradigm shift from simple connectivity to the ``Internet of Everything”, where ubiquitous intelligence permeates a vast, heterogeneous network of devices~\cite{11017513}. 
This transition catalyzes the emergence of unprecedented, hyper-dynamic scenarios, such as high-mobility UAV swarms, autonomous vehicular platoons, and space-air-ground integrated networks, that operate far beyond the design parameters of legacy systems, as illustrated in Fig.~\ref{Fig:bg}~\cite{11031135}. 
Consequently, traditional communication architectures, originally engineered for static topologies and predictable traffic patterns, are proving fundamentally inadequate to meet the stringent safety and reliability constraints imposed by these highly volatile environments~\cite{10974617,10443575}.
These emerging applications operate under multiple interdependent constraints, including limited energy budgets, spectrum masks, interference thresholds, and latency requirements that must be simultaneously satisfied across dynamic and uncertain wireless environments~\cite{10970466, 10700695}.


The inherent complexity of these applications stems from three fundamental challenges that render classical optimization inadequate.
First, the resource allocation problem corresponds to a \textit{high-dimensional mixed-integer nonconvex optimization} problem, where discrete decisions (e.g., user association, spectrum access) and continuous variables (e.g., beamforming, power control, bandwidth allocation, route planning, steering control) are intimately coupled across spatially distributed entities, spanning user terminals, edge servers, and space-air-ground integrated nodes \cite{8854339,10051712}.
In such distributed heterogeneous environments, traditional iterative solvers like Alternating Optimization (AO) often fail to converge to feasible solutions within the ultra-low latency coherence times of wireless channels.
Second, these systems exhibit \textit{stochastic dynamics} where channel conditions, traffic patterns, and energy availability evolve unpredictably, invalidating the assumptions of static optimization \cite{10746553}. 
This volatility is further exacerbated by high-mobility devices, such as UAVs, aircraft, and autonomous vehicles, which not only increase  environmental fluctuations but also impose significantly stricter latency requirements to maintain reliable connectivity.
Third, they demand  constraint satisfaction in two different time scales, requiring the simultaneous enforcement of  \textit{instantaneous hard} bounds (e.g., collision obstacles, velocity, path and mission planning, and emergency QoS,  sub-millisecond URLLC latency) for immediate safety, alongside \textit{cumulative soft} constraints (e.g., average energy budgets, QoS fairness) to sustain the networking capabilities.
The diverse dual-time scale safety-critical constraints for some typical scenarios are summarized in Table~\ref{tab:constraints}.


We exemplify these challenges through a representative mission-critical multi-UAV network.
In this environment, high-dimensional decision variables related to 3D trajectories and spectrum allocation are distributed across distinct aerial agents, yet remain tightly coupled through mutual interference and physical collision risks. 
The system demands the simultaneous satisfaction of instantaneous hard constraints, such as collision avoidance and ultra-reliable signaling, alongside cumulative energy budgets.
This complex coupling among spatially dispersed entities makes rigorous safety assurance intractable for conventional static optimization.

Classical approaches, which typically formulate these as mixed integer nonlinear programming (MINLP) problems, represent static optimization frameworks that fail to address this confluence of challenges. These methods struggle particularly in hostile or rapidly changing propagation conditions, frequently operating in infeasible regions or violating critical safety bounds during transient periods~\cite{9683088}. Moreover, the temporal-correlation nature of wireless radio environment, where current decisions affect future states through battery depletion, queue evolution, and thermal dynamics, necessitates multi-stage optimization under uncertainty that static methods cannot handle~\cite{11031135}.
This fundamental mismatch between the dynamic, uncertain nature of modern wireless systems and the limitations of traditional optimization has motivated the need for online control frameworks, particularly learning-based approaches that can adapt to stochastic dynamics while providing mathematical guarantees for constraint satisfaction.
Such a paradigm is for deploying wireless systems in safety-critical real-world applications~\cite{10877725}.

\begin{figure*}[!tbp]
\centerline{\includegraphics[width=.975\textwidth]{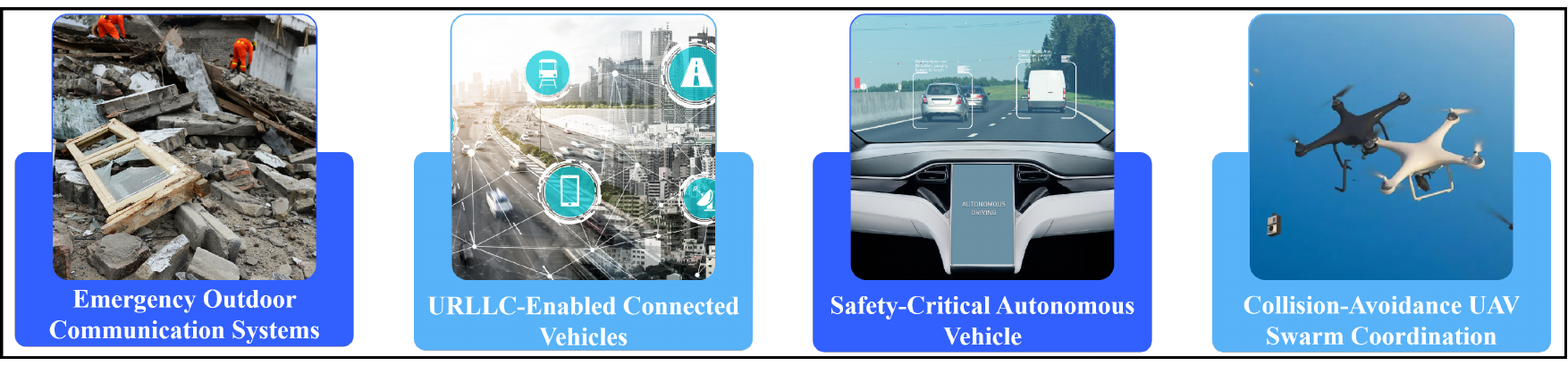}}
\caption{
The background of the safety-critical communication applications.
}
\label{Fig:bg}
\end{figure*}

\begin{table*}[!tbp]
\caption{Safety-critical constraints in real-world wireless communication systems}
\centering
\begin{tabular}{l|c|c|c|c}
\hline\hline
\textbf{Category} & \textbf{Emergency Commun.} & \textbf{Connected Vehicles} & \textbf{Autonomous Driving} & \textbf{UAV Swarm} \\
\hline\hline
\textbf{Hard Constraints}
& Transmit power limits$^{\ddagger}$; & End-to-end latency$^{\dagger}$;& Collision avoidance$^{\ddagger}$;& DAA\textsuperscript{*} delivery$^{\dagger}$; \\
& Commun. reliability$^{\dagger}$;& Commun. integrity$^{\dagger}$;& Real-time perception$^{\dagger}$; & Energy budget$^{\dagger}$;\\
\hline
\textbf{Soft Constraints}
& Spectrum usage efficiency$^{\dagger}$& Broadband throughput$^{\dagger}$& Positioning accuracy$^{\dagger}$ &Formation stability$^{\ddagger}$ \\
\hline\hline
\end{tabular}
\vspace{4pt}
\raggedright
\footnotesize{
\textsuperscript{*}DAA: \textbf{D}etect \textbf{a}nd \textbf{A}void message. \\
$^{\dagger}$ Constraints optimized and managed by the proposed framework (e.g., via Safe-Deep Q-Learning). \\
$^{\ddagger}$ Constraints enforced by deterministic base-layer mechanisms (e.g., safety shields or hardware limits).
}
\label{tab:constraints}
\end{table*}

\subsection{Related Work}
\label{relatedwork}
To address these fundamental challenges, researchers have developed various approaches that partially tackle the dynamic and constraint-aware aspects of wireless resource allocation. While no existing method comprehensively solves all critical issues simultaneously, we categorize the current literature into three main directions based on their primary focus:
\subsubsection{Dynamic Programming}
Existing solutions model dynamic resource allocation as Markov Decision Processes (MDPs) to capture the sequential decision-making nature. 
However, standard MDPs struggle in wireless environments due to their inability to enforce hard physical constraints (e.g., power budgets, fronthaul capacity) and critical service requirements like URLLC~\cite{10529729}. 
These challenges are particularly critical in scenarios requiring strict multi-constraint satisfaction, as violations can lead to catastrophic system failures, such as physical collisions in UAV swarms due to latency breaches or loss of control in autonomous vehicles.
Constrained MDPs (CMDPs) overcome these limitations by embedding domain-specific engineering constraints within state-evolving feasible actions and cost thresholds~\cite{11074719}. Dynamic programming (DP) has proven effective in various wireless applications within CMDP frameworks, spanning resource management and semantic communication. For instance, \cite{9968223} applied DP to optimize cache placement in cloud radio access networks (C-RANs) featuring hybrid millimeter-wave/microwave fronthaul links, directly addressing communication constraints for efficient data delivery. Similarly, \cite{10269137} developed an adaptive dynamic programming (ADP) approach to optimize source selection and resource allocation in wireless-powered relay networks while adhering to stringent energy constraints. Furthermore, \cite{9770190} applied DP to context-based semantic communication, proposing a framework that optimizes message transmission by considering both semantic relevance and communication constraints.

Despite these advances, the curse of dimensionality makes exact DP solutions computationally prohibitive for large-scale systems with high-dimensional state spaces. Moreover, CMDPs typically require known system models and struggle with the nonconvex, mixed-integer nature of many wireless optimization problems, often necessitating approximations that may compromise constraint satisfaction guarantees.

\subsubsection{Deep Reinforcement Learning (DRL)}
DRL has emerged as the dominant paradigm for addressing high-dimensional and mixed-integer wireless control tasks, including dynamic resource allocation, beamforming, and distributed network management. Recent studies apply off-the-shelf DRL algorithms with domain-specific network architectures to large-scale resource allocation problems, demonstrating strong empirical gains in throughput, latency, and energy metrics compared to classical heuristics~\cite{10605908}.
Several notable applications showcase DRL's effectiveness in wireless systems. \cite{10666879} proposed a multi-agent DRL approach for two-timescale resource allocation in network slicing for Vehicle-to-Everything (V2X) communications, optimizing resources across fast and slow timescales while adapting to dynamic network loads in large-scale vehicular networks. \cite{8771176} applied DRL for dynamic computation offloading in wireless-powered mobile edge computing networks, optimizing computation efficiency and energy consumption in resource-constrained environments. Similarly, \cite{10636057} developed a DRL-based joint beamforming design for active reconfigurable intelligent surfaces (RIS)-enabled cognitive multicast systems, focusing on optimizing beamforming while accounting for wireless channel dynamics.

Nevertheless, conventional DRL solutions face critical limitations in safety-critical wireless applications.
Importantly, they typically lack formal safety guarantees and can violate operational constraints during exploration phases in real-time networks. The exploration-exploitation trade-off, inherent in DRL, can lead to constraint violations that are unacceptable in mission-critical scenarios. Furthermore, standard DRL methods provide no theoretical convergence guarantees for constrained problems, making them unsuitable for applications requiring strict constraint adherence throughout the learning process.

\subsubsection{Safe Learning}
In recent years, various safe learning approaches have been proposed that incorporate safety constraints directly into the learning process, enabling deployment of learning-based control systems in safety-critical applications~\cite{10675394}.
Methods such as Lyapunov-based actor-critic and Lagrangian relaxation-based RL have been adapted for radio resource allocation under safety requirements~\cite{10675394,10943268}.
Learning process modification techniques embed safety constraints directly into policy updates through various mechanisms. 
Lyapunov-based approaches optimize offloading decisions and energy consumption in mobile edge computing systems while ensuring queue stability~\cite{10974474}. \cite{11087615} proposed a two-stage framework for safe RL that minimizes power consumption in wireless networked control systems while satisfying peak age of information (AoI) violation probability constraints, using a ``teacher” network to guide the RL training process under safety constraints.
A key advancement in safe RL involves applying stochastic dynamic models to real-world safety-critical decision-making scenarios. Lagrangian-based RL effectively addresses CMDP challenges by combining dual-ascent adaptability for constraint thresholds with quadratic penalty enforcement for adherence to feasible regions. 
This approach provides mathematical rigor with asymptotic feasibility guarantees while ensuring sample efficiency.

However, despite these advances, generic safe learning frameworks often fail to provide the strict, instantaneous safety guarantees required for mission-critical wireless deployments.
The above approach typically incorporates constraints as regularization (penalty) terms in the objective to promote smoother optimization and convergence.
However, this formulation yields a multi-objective trade-off and does not guarantee satisfaction of \emph{instantaneous hard} constraints.
As a result, constraint violations may still occur during learning or deployment, which can be unacceptable in zero-tolerance safety settings, such as collision avoidance or stringent outage constraints.

\subsection{Contributions}
This paper advances safe learning for wireless networks by introducing a novel primal-dual RL framework that provides mathematical guarantees for hard constraint satisfaction during operation.
Crucially, rather than redundantly learning fundamental physical rules from scratch, this study introduces a hierarchical shielded architecture. 
Operating atop mature, deterministic control baselines (acting as a physical safety shield), the proposed approach serves as a higher-order optimization layer.
The core contribution lies in synergizing these strict physical boundaries with high-dimensional networking objectives, such as joint spectrum and power allocation, thereby guaranteeing absolute operational safety while maximizing communication performance.
While conventional control paradigms generally isolate physical safety (e.g., collision avoidance or hardware limits) from data transmission, our methodology explicitly couples these domains. Physical states are dynamically optimized to serve underlying networking objectives. This integration creates a highly constrained problem where communication demands (e.g., spectrum allocation and interference mitigation) must be jointly satisfied within strict physical operational boundaries.
While the proposed framework accommodates various learning architectures, we specifically instantiate it as Safe-Deep Q-Learning to efficiently resolve the combinatorial integer variables inherent to wireless systems, such as spectrum allocation. Crucially, unlike on-policy methods that suffer from severe gradient variance and instability when exposed to fluctuating Lagrangian penalties, our off-policy value-based approach leverages experience replay to smoothly absorb these penalty spikes.
This algorithmic design guarantees robust convergence in strictly constrained discrete action spaces.
The key contributions are organized into methodology and application aspects:
\begin{itemize}

\item \textbf{Constraint-Aware Problem Modeling:} We model wireless resource allocation as a constrained MDP that enables simultaneous handling of discrete decisions, uncertain dynamics, and multi-timescale safety requirements.

    \item \textbf{Safe-Deep Q-Learning Algorithm:} 
    We propose a primal-dual deep Q-learning, featuring constraint violation tracking and adaptive penalty scaling tailored for wireless environments.

    \item \textbf{Theoretical Guarantees:} We prove both convergence to optimal policies and constraint satisfaction guarantees, ensuring feasibility throughout learning while achieving asymptotic optimality.

    \item \textbf{Practical Validation:} We apply Safe-Deep Q-Learning to two safety-critical applications, multi-UAV wireless communications and post-disaster emergency communication networks, to validate its effectiveness in solving mixed-integer path planning under stochastic channels with instantaneous safety constraints and dynamic resource allocation under uncertain arrivals with dual-timescale requirements, respectively.
\end{itemize}

\subsection{Paper Organization and Notations}
\textbf{Key Notations:} Throughout this paper, we use $\mathcal{S}$ and $\mathcal{A}$ to denote the state and action spaces, respectively. The action space comprises discrete variables $\mathbf{x}_t \in \{0,1\}^{N_d}$ and continuous variables $\mathbf{y}_t \in \mathbb{R}^{N_c}$. The policy $\pi: \mathcal{S} \to \mathcal{P}(\mathcal{A})$ maps states to action distributions. Constraint sets are indexed by $\mathcal{I}$ (cumulative), $\mathcal{J}$ (equality), and $\mathcal{K}$ (inequality) with corresponding cost functions $c_i(\cdot)$, $e_j(\cdot)$, and $g_k(\cdot)$. Lagrange multipliers are denoted by $\boldsymbol{\lambda} = \{\lambda_i\}_{i \in \mathcal{I}}$, $\boldsymbol{\mu} = \{\mu_j\}_{j \in \mathcal{J}}$, and $\boldsymbol{\nu} = \{\nu_k\}_{k \in \mathcal{K}}$ with penalty parameters $\rho_i, \rho_j, \rho_k \geq 0$. Bold symbols represent vectors (e.g., $\mathbf{H}_t$ for channel gains, $\mathbf{Q}_t$ for queue states), while calligraphic fonts denote sets and spaces.

\textbf{Paper Organization:}
The remainder of this paper is structured as follows. Section~\ref{sec:gen_framework} establishes the mathematical foundation by formulating the safety-critical wireless resource allocation problem as a CMDP. Section~\ref{Sec:SafeRL} develops the proposed Safe-Deep Q-Learning framework, detailing the augmented Lagrangian mechanism and the dual-timescale algorithmic design. The theoretical analysis regarding convergence, optimality, and constraint satisfaction is rigorously provided in Section~\ref{sec:theoretical_analysis}. The proposed method is validated through two distinct case studies: mission-critical multi-UAV networks in Section~\ref{subsec:uav_system} and post-disaster emergency communication networks in Section~\ref{sec:studycase2}. Finally, Section~\ref{conclusions} concludes the paper with key findings.

\section{Problem Formulation: Safe Control in Wireless Systems}
\label{sec:gen_framework}

Building upon the challenges identified in Section~\ref{Sec: Intro}, we formulate the safe wireless resource allocation problem as a CMDP that addresses all three fundamental issues: mixed-integer nonconvex optimization, stochastic dynamics, and dual-timescale constraint satisfaction. 
This formulation provides the mathematical foundation for developing Safe-Deep Q-Learning algorithms that guarantee constraint satisfaction while optimizing system performance in dynamic wireless environments.

\subsection{Safety-Constrained Wireless Resource Optimization}
\label{subsec:constrained_wireless}
Resource allocation in emerging safety-critical wireless systems transcends traditional throughput maximization, serving as the fundamental control layer that governs physical system integrity.
This process involves high-dimensional discrete decisions (e.g., user selection, channel assignment) and continuous variables (e.g., power levels, beamforming, mobility control) that must be optimized within strict physical boundaries.
Unlike legacy best-effort networks, these decision variables are inextricably coupled with hard safety limits, such as hardware thermal thresholds, kinetic collision avoidance distances, and interference tolerance levels, where violations risk catastrophic operational failures.
We formulate this as a general dynamic optimization problem that captures the temporal-correlation nature of wireless resource allocation under strictly enforced physical uncertainties.

At each time step $t$, the system state $s_t \in \mathcal{S}$ encapsulates the current wireless environment including channel gains $\mathbf{H}_t$, queue states $\mathbf{Q}_t$, energy levels $\mathbf{E}_t$, and user demands $\mathbf{D}_t$. 
The action space $\mathcal{A}$ comprises both discrete variables $\mathbf{z} \in \mathcal{Z}$, where $\mathcal{Z}=\{\zeta_1, \dots, \zeta_L\}^{N_d}$ is a finite set of discrete feasible values (e.g., MCS levels, subchannel indices, or codebook entries), and continuous variables $\mathbf{y} \in \mathbb{R}^{N_c}$ (e.g., power allocation $p_u \in [0, P_{\max}]$, bandwidth $b_{u,c} \in [0, B_{\max}]$).
Thus, each action $a_t = (\mathbf{x}_t, \mathbf{y}_t)$ represents a joint discrete-continuous decision.

The wireless resource allocation problem can be cast in the following constrained optimization formulation:
\begin{align}
&\text{(P1):}~\max_{\pi} \mathbb{E}_{\pi}\left[ \sum_{t=0}^{\infty} \gamma^t r(s_t, a_t) \right] \label{eq:wireless_opt} \\
&\text{s.t.}~\mathbb{E}_{\pi}\left[ \sum_{t=0}^{\infty} \gamma^t c_i(s_t, a_t) \right] \leq d_i, \quad \forall i \in \mathcal{I}, \label{eq:cumulative_constr}  \\
&\quad \quad g_k(s_t, a_t) \leq 0, \quad \forall k \in \mathcal{K}, \quad \forall t \geq 0, \label{eq:instant_constr}  \\
&\quad \quad e_j(s_t, a_t) = 0, \quad \forall j \in \mathcal{J}, \quad \forall t \geq 0, \label{eq:equality_constr} \\
&\quad \quad \mathbf{z}_t \in \mathcal{Z}, \quad \mathbf{y}_t \in \mathcal{Y}(s_t), \quad \forall t \geq 0, \label{eq:domain_constr}
\end{align}
where $\pi: \mathcal{S} \to \mathcal{P}(\mathcal{A})$ denotes the policy mapping states to action distributions, $r(s_t, a_t)$ represents the system reward (e.g., sum-rate, energy efficiency), and $\gamma \in [0,1)$ is the discount factor. Constraint~\eqref{eq:cumulative_constr} enforces cumulative resource limits with cost functions $c_i(\cdot)$ and budgets $d_i$, while~\eqref{eq:instant_constr} ensures instantaneous safety requirements through inequality functions $g_k(\cdot)$. 
Additionally, \eqref{eq:equality_constr} imposes strict operational equality conditions via $e_j(\cdot)$, such as orthogonal spectrum assignment or exact resource matching. 
The feasible continuous action space $\mathcal{Y}(s_t)$ depends on the current state, capturing state-dependent constraints such as available power and bandwidth.

The mathematical structure of Problem (P1) embodies the fundamental challenges identified in Section~\ref{Sec: Intro}. 
First, spatially distributed users are inextricably coupled through mutual interference and physical safety requirements embedded in $g_k(\cdot)$, transforming local mixed-integer decisions into a high-dimensional combinatorial problem.
Second, the absence of real-time global information at distributed entities exacerbates the stochasticity of state transitions $P(s_{t+1}|s_t, a_t)$, rendering accurate prediction of environmental dynamics infeasible.
These complexities necessitate the rigorous formulation of dual-timescale safety enforcement, the third fundamental challenge, which is detailed in Section~\ref{subsec:wireless_constraints}.

\subsection{Safety Constraints}
\label{subsec:wireless_constraints}

The constraints in Problem (P1) directly correspond to the operational requirements and safety mandates in wireless communication systems. We categorize these constraints based on their temporal characteristics and criticality levels, as introduced in the related work discussion.

\textbf{Instantaneous Hard Constraints} (e.g., Constraints~\eqref{eq:instant_constr} and \eqref{eq:equality_constr}) enforce strict per-timestep compliance for safety-critical parameters:
\begin{itemize}
\item \textit{Power Constraints (Inequality):} Individual transmit power limits $p_u(s_t, a_t) \leq P_{\max,u}$ and total power budgets $\sum_{u} p_u(s_t, a_t) \leq P_{\text{total}}$ prevent hardware damage and ensure regulatory compliance.
\item \textit{Velocity/Path Control (Inequality):} 
Velocity and path planning constraints $\|\mathbf{v}_u(s_t, a_t)\|_2 \leq V_{max}$ ensure flight stability and prevent agents from exiting designated operational zones (geofencing), where $\mathbf{v}_u$ denotes the instantaneous velocity vector.
\item \textit{Physical Safety (Inequality):} 
Collision avoidance is paramount for mobile agents. For UAV swarms, inter-agent separation $\|\mathbf{p}_i(s_t, a_t) - \mathbf{p}_j(s_t, a_t)\|_2 \geq d_{\min}$ ensures safety in 3D space; for connected vehicles, obstacle avoidance requires $\|\mathbf{p}_i(s_t, a_t) - \mathbf{p}_\mathcal{O}\|_2 \geq d_{obs}$ to maintain safe headway from static or dynamic obstacles $\mathcal{O}$.
\item \textit{Spectrum Constraints (Inequality):} Channel assignment constraints $\sum_{u} z_{u,c} \leq 1$ prevent multiple users from using the same frequency band simultaneously.
\item \textit{Intrinsic Hardware Limitations (Equality):} 
Emerging B5G/6G heterogeneous hardware substrates are subject to strict physical operational boundaries defined by their underlying circuitry. 
For instance, in RIS-aided architectures, the absence of active RF chains and power amplifiers compels the passive reflecting elements to strictly adhere to power conservation principles, necessitating the unit-modulus equality constraint $|e^{j\theta_m}| - 1 = 0$.
\item \textit{Resource Exclusion (Equality):} Binary association constraints, denoted as $e_j(s_t,a_t) = \sum_{u}z_{u,c} - 1 = 0$, ensure that each orthogonal subchannel is assigned to exactly one user (or idle), strictly enforcing spectrum exclusivity.
\end{itemize}

\textbf{Cumulative Resource Constraints} (e.g., Constraint~\eqref{eq:cumulative_constr}) govern finite-horizon resource budgeting:
\begin{itemize}
\item \textit{Energy Budget:} Total energy consumption limits $\sum_{t=0}^{T} \gamma^t \sum_u p_u(s_t, a_t) \leq E_{\text{budget}}$ for battery-powered devices.
\item \textit{Spectrum Licensing:} Licensed spectrum usage quotas $\sum_{t=0}^{T} \gamma^t \sum_c B_{c}(s_t, a_t) \leq B_{\text{quota}}$ ensure regulatory compliance.
\item \textit{QoS Guarantees:} Long-term throughput requirements $\sum_{t=0}^{T} \gamma^t R_u(s_t, a_t) \geq R_{\min,u}$ for service level agreements.
\item \textit{Service Reliability:} 
Cumulative outage constraints ${\mathbb{E}\left[\sum_{t=0}^{T}\gamma^{t}\mathbb{I}\text{SINR}_{u}(t) \leq \Gamma_{th}\right]}\leq \epsilon_{out}$
limits the frequency of connection failures, ensuring long-term service continuity for mission-critical data flows.
\end{itemize}

\section{Safe Reinforcement Learning for Wireless Systems}
\label{Sec:SafeRL}
To address these fundamental challenges, we develop Safe-Deep Q-Learning that integrates constraint-aware action selection with adaptive penalty mechanisms. 
In this framework, the augmented Lagrangian formulation is directly embedded into the temporal difference learning process, where linear dual variables and quadratic penalty terms associated with constraints are incorporated into the Q-function update target. 
By reshaping the Bellman operator with these penalty signals, the algorithm learns safety-aware value estimates that inherently penalize infeasible transitions, guiding the policy to maximize rewards while strictly adhering to the feasible region.
\subsection{Safe RL Formulation and Algorithmic Challenges}
\label{subsec:safe_rl_formulation}
To address the complex dynamics inherent in Problem (P1), we abstract the specific physical wireless resource allocation problem into a generalized reinforcement learning framework.
Unlike standard RL that focuses solely on reward maximization, safe RL extends the framework to handle CMDPs where safety requirements must be maintained during both learning and deployment phases.

Building upon Problem (P1), we formalize the safe RL framework as a CMDP $\langle \mathcal{S}, \mathcal{A}, P, r, \mathbf{c}, \mathbf{e},\mathbf{g}, \gamma \rangle$, where the safe RL problem seeks a policy $\pi: \mathcal{S} \to \mathcal{P}(\mathcal{A})$ that maximizes expected discounted rewards while satisfying safety constraints:
\begin{align}
&\text{(P2):}~\max_{\pi} ~ V_r(\pi) = \mathbb{E}_{\pi}\left[ \sum_{t=0}^{\infty} \gamma^t r(s_t, a_t) \right] \label{eq:safe_rl_objective} \\
&\text{s. t.} \quad  V_{c_i}(\pi) = \mathbb{E}_{\pi}\left[ \sum_{t=0}^{\infty} \gamma^t c_i(s_t, a_t) \right] \leq d_i, ~\forall i \in \mathcal{I}, \label{eq:safe_rl_cumulative} \\
&\quad\quad g_k(s_t, a_t) \leq 0, \quad \forall k \in \mathcal{K},~\forall t \geq 0, \label{eq:safe_rl_instantaneous}\\
&\quad\quad e_j(s_t, a_t) = 0, \quad \forall j \in \mathcal{J}, ~\forall t \geq 0,\label{eq:safe_rl_equality} 
\end{align}
where $\gamma \in [0,1)$ is the discount factor, $d_i \geq 0$ is the cost threshold.
The functions 
$e_j:\mathcal{S}\times\mathcal{A}\rightarrow\mathbb{R}$
 and $g_k: \mathcal{S} \times \mathcal{A} \to \mathbb{R}$
 define instantaneous equality (e.g., spectrum logical exclusion, unit-modulus phase shifts) and inequality safety boundaries, respectively.
This formulation directly corresponds to the constraint structure in Problem (P1), where constraints~\eqref{eq:safe_rl_cumulative}, ~\eqref{eq:safe_rl_instantaneous} and \eqref{eq:safe_rl_equality} map to cumulative constraints~\eqref{eq:cumulative_constr}, instantaneous inequality constraints~\eqref{eq:instant_constr} and instantaneous equality constraints \eqref{eq:equality_constr}, respectively.

The application of safe RL to wireless systems presents fundamental algorithmic challenges that extend beyond standard constrained RL methods. First, \textit{dual-timescale constraint handling} requires distinct treatment: cumulative constraints~\eqref{eq:safe_rl_cumulative} can be addressed through Lagrangian relaxation and dual variable updates, while instantaneous constraints~\eqref{eq:safe_rl_instantaneous} demand projection methods or barrier functions to pursue strict feasibility at every timestep.
Second, the \textit{mixed-integer nature} of wireless decisions, where discrete variables $\mathbf{z}_t \in \{0,1\}^{N_d}$ couple with continuous variables $\mathbf{y}_t \in \mathcal{Y}(s_t)$, creates combinatorial complexity that off-the-shelf DRL architectures cannot effectively manage. These methods generally lack the structural mechanisms to perform simultaneous joint optimization over hybrid action spaces, often resorting to sub-optimal relaxation or discretization techniques.

Third, \textit{nonconvex feasible regions} arising from interference-coupled constraints create highly irregular action spaces $\mathcal{A}_{\text{safe}}(s_t) = \{a \in \mathcal{A} : g_k(s_t, a) \leq 0, \forall k \in \mathcal{K}, \text{and}~e_j(s_t,a)=0, \forall j\in\mathcal{J}\}$ that fundamentally challenge traditional convex optimization and unconstrained RL approaches.

To address these challenges, we propose a novel primal-dual RL framework that provides mathematical guarantees for hard constraint satisfaction. 
While this framework is theoretically extensible to arbitrary RL algorithms, we specifically instantiate it as Safe-Deep Q-Learning in this work. 
This design choice explicitly addresses the inherent combinatorial and mixed-integer nature of wireless resource allocation (e.g., discrete spectrum access), exploiting the superior stability and convergence speed of value-based methods in discrete action spaces~\cite{mnih2015human}.

\subsection{Preliminaries on Deep Q-Learning}
The Deep Q-Network (DQN) approximates the optimal action-value function $Q^*(s, a)$ using a non-linear neural network parameterized by $\theta$. The standard Q-learning objective seeks to minimize the mean squared temporal difference (TD) error between the current value estimate and the target value derived from the Bellman optimality equation. The loss function is defined as:
\begin{align}
\mathcal{L}(\theta) = \mathbb{E}_{(s_t, a_t, r_t, s_{t+1}) \sim \mathcal{D}} \left[ \left( y_t - Q_\theta(s_t, a_t) \right)^2 \right],
\end{align}
where $\mathcal{D}$ denotes the experience replay buffer employed to decorrelate training samples. The target value $y_t$ is computed using a separate target network with parameters $\hat{\theta}$, which are periodically synchronized with $\theta$:
\begin{align}
y_t = r(s_t, a_t) + \gamma \max_{a' \in \mathcal{A}} Q_{\hat{\theta}}(s_{t+1}, a').
\end{align}
While effective for unconstrained reward maximization, standard DQN lacks the intrinsic mechanism to enforce the strict safety constraints defined in Problem (P1), necessitating the integration of the augmented Lagrangian framework detailed in the subsequent subsections.

\subsection{Augmented Lagrangian Safe RL Framework}
\label{subsec:alm_framework}

Building upon the safe RL formulation from Section~\ref{subsec:safe_rl_formulation}, we develop an augmented Lagrangian framework that transforms the constrained problem into an unconstrained framework while maintaining strict constraint satisfaction guarantees. 
It is worth noting that in many wireless communication scenarios, cumulative constraints (e.g., average power budgets or long-term fairness) exhibit a relatively ``soft'' temporal characteristic, where transient fluctuations are permissible provided the long-term expectation is satisfied. For such constraints, standard linear Lagrangian relaxation is often sufficient to guide the policy towards the feasible region without the numerical instability introduced by aggressive quadratic penalties. Consequently, to optimize the trade-off between training convergence speed and strict safety enforcement, we adopt a hybrid penalty formulation in the subsequent theoretical analysis and experimental validation: we employ pure linear dual ascent for cumulative constraints to ensure stable gradient estimation, while reserving the augmented quadratic penalty terms strictly for instantaneous hard constraints. This strategic decomposition prevents the reward landscape from becoming overly distorted, thereby significantly accelerating the learning process while guaranteeing rigorous satisfaction of critical safety bounds.

The augmented Lagrangian for Problem~\eqref{eq:safe_rl_objective} addresses different constraint types through specialized penalty mechanisms:
\begin{align}
&\mathcal{L}^{aug}(\boldsymbol{\lambda}, \boldsymbol{\mu}, \boldsymbol{\nu}, \pi) \notag \\ = & V_r(\pi) - \sum_{i \in \mathcal{I}} \Lambda_i - \sum_{j \in \mathcal{J}} \Upsilon_j - \sum_{k \in \mathcal{K}} \Psi_k  
\end{align}
where the penalty terms are defined as:
\begin{align}
\Lambda_i &= \lambda_i ( \hat{V}_{c_i}(\pi) - d_i ), \\
\Upsilon_j &= \mathbb{E}_\pi \left[ \sum_{t=0}^{\infty} \gamma^t \underbrace{ \left( \mu_j e_j(s_t,a_t) + \frac{\rho_j}{2} e_j(s_t,a_t)^2 \right)}_{\eqqcolon \phi_{\text{eq},j}(s_t,a_t)} \right],  \label{lambdaPenalty}\\
\Psi_k &= \mathbb{E}_\pi \left[ \sum_{t=0}^{\infty} \gamma^t \underbrace{\left( \nu_k g_k^+ + \frac{\rho_k}{2} (g_k^+)^2 \right)}_{\eqqcolon \phi_{\text{inst},k}(s_t,a_t)} \right] \label{PsiPenalty}.
\end{align}
with $g_k^+ = \max(0, g_k(s_t, a_t))$. 

The dual variables are updated via projected dual ascent with the two-timescale structure:
\begin{align}
\label{dualUpadate1}
\lambda_i^{\tau+1} &= \left[ \lambda_i^\tau + \beta_\lambda \left( \hat{V}_{c_i}(\pi) - d_i \right) \right]_+, \quad \forall i \in \mathcal{I}
\end{align}
\begin{align}
\label{dualUpadate2}
\mu_j^{\tau+1} &= \mu_j^\tau + \beta_\mu \mathbb{E}_\pi[e_j(s_t,a_t)], \quad \forall j \in \mathcal{J}  
\end{align}
\begin{align}
\label{dualUpadate3}
\nu_k^{\tau+1} &= \left[ \nu_k^\tau + \beta_\nu \mathbb{E}_\pi[g_k^+] \right]_+, \quad \forall k \in \mathcal{K}
\end{align}
where $\beta_\lambda, \beta_\mu, \beta_\nu > 0$ are dual learning rates and $[x]_+ = \max(0, x)$. Penalty parameters are updated adaptively: 
$\rho_j^{\tau+1} = \xi \rho_j^\tau$, $\rho_k^{\tau+1} = \xi \rho_k^\tau$ where $\xi > 1$ ensures progressive constraint enforcement.

\subsection{Safe-Deep Q-Learning Algorithm}
\label{subsec:safe_q_learning_algorithm}

We now present Safe-Deep Q-Learning that operationalizes the augmented Lagrangian framework with specialized action selection for mixed-integer wireless decisions.

\subsubsection{Constraint-Aware Q-Function Updates}
The augmented Q-function integrates constraint penalties into temporal difference learning. Following the ALM framework, the infinite-horizon objective function becomes:
\begin{align}
R_{\boldsymbol{\rho}}(\pi,\boldsymbol{\lambda},\boldsymbol{\mu},\boldsymbol{\nu}) = \mathbb{E}_\pi \left[\sum_{t=0}^{\infty}\gamma^{t}(r(s_t,a_t) - \phi_{all}(s_t,a_t))\right],
\label{R_rho(pi, lambda)}
\end{align}
where $\phi_{all}(s_t,a_t) = \phi_\text{inst}(s_t,a_t) + \phi_{\text{eq}}(s_t,a_t) + \phi_{\text{cum}}$ combines all penalty terms, with 
\begin{align}
    \phi_\text{cum}& \coloneqq \sum_{i \in \mathcal{I}} \Lambda_i, \\ 
    \phi_\text{eq}(s_t,a_t) &\coloneqq \sum_{j \in \mathcal{J}} \phi_{\text{eq},j}(s_t,a_t), \\ 
    \phi_\text{inst}(s_t,a_t) &\coloneqq \sum_{k \in \mathcal{K}} \phi_{\text{inst},k}(s_t,a_t).
\end{align}
In these definitions, $\phi_{\text{inst}}$ and $\phi_{\text{eq}}$ represent the augmented Lagrangian penalties for instantaneous constraints, while $\phi_{\text{cum}}$ incorporates the cumulative budget requirements into the per-step reward signal. 
Conversely, $\phi_{\text{cum}}^{\text{step}}$ mathematically distributes the cumulative budget requirements into the per-step reward signal.
The augmented Q-function becomes:
\begin{align}
\label{ALQ-fun}
&Q^{aug}(s_t, a_t) \notag \\
= &r(s_t, a_t) + \gamma \max_{a'} Q^{aug}(s_{t+1}, a') - \phi_{step}(s_t,a_t)  
\end{align}
where $\phi_{step}(s,a) = \phi_\text{inst}(s,a) + \phi_\text{eq}(s,a) + \phi_\text{cum}^\text{step}(s,a)$.
Crucially, the step-wise cumulative penalty is formulated as:
\begin{align}
\phi_\text{cum}^\text{step}(s_t,a_t) = \sum_{i\in \mathcal{I}} \lambda_i \left( c_i(s_t,a_t) - (1-\gamma)d_i \right).
\end{align}
The subtraction of the term $(1-\gamma)d_i$ is mathematically imperative; it normalizes the long-term budget $d_i$ into an equivalent per-step allowance via the infinite geometric series of the discount factor $\gamma$. This ensures that the instantaneous penalized reward seamlessly aligns with the global trajectory constraints, thereby preventing biased gradient estimations and facilitating strictly stable dual variable convergence.
Parameters are updated by minimizing:
\begin{align}
\label{MSEQ}
\theta^{t+1} = \theta^t - \alpha_t \nabla_\theta \mathbb{E} \left[ \left( Q_\theta(s,a) - Q^{aug}(s,a) \right)^2 \right]  
\end{align}

\subsubsection{Mixed-Integer Safe Action Selection}

For wireless systems with general discrete variables $\mathbf{z}_t \in \mathcal{Z}$ (e.g., discrete beamforming indices or user-channel assignments) and continuous variables $\mathbf{y}_t$ (power allocation), we employ two-stage selection:

\textbf{Stage 1 - Discrete Selection:}
\begin{align}
\mathbf{z}_t = \begin{cases}
\arg\max_{\mathbf{z} \in \mathcal{Z}_{safe}(s_t)} Q_\theta(s_t, (\mathbf{z}, \mathbf{y}^*(\mathbf{z}))) & \text{w.p. } 1-\epsilon \\
\text{uniform from } \mathcal{Z}_{safe}(s_t) & \text{w.p. } \epsilon
\end{cases} 
\end{align}
where $\mathcal{Z}_{safe}(s_t) = \{\mathbf{z} \in \mathcal{Z} : \exists \mathbf{y}, g_k(s_t, (\mathbf{z}, \mathbf{y})) \leq 0, \forall k\}$ denotes the \textit{state-dependent feasible discrete action set}. 
It filters out discrete decisions $\mathbf{z}$ that cannot satisfy the instantaneous hard constraints $g_k \leq 0$ for any continuous $\mathbf{y}$, thereby pruning the search space and ensuring that the agent's exploration is restricted to physically attainable regions.

\textbf{Stage 2 - Continuous Optimization:}
\begin{align}
\mathbf{y}_t^* = \arg\max_{\mathbf{y} \in \mathcal{Y}(s_t)} Q_\theta(s_t, (\mathbf{z}_t, \mathbf{y})) \text{ s.t. } g_k(s_t, (\mathbf{z}_t, \mathbf{y})) \leq 0, \forall k.  
\end{align}

\begin{algorithm}[!t]
\caption{Safe-Deep Q-Learning for Optimizing Problem~\text{(P2)}}
\label{alg:safe_q_learning}
\SetAlgoLined
\SetKwInOut{Input}{Input}\SetKwInOut{Output}{Output}
\Input{State space $\mathcal{S}$, constraints $\{g_k\}, \{c_i\}$, thresholds $\{d_i\}$}
\Output{Safe policy $\pi^*$}
Initialize $Q_\theta$, $Q_{\hat{\theta}}$, $\mathcal{D} = \emptyset$, $\lambda_i^0 = 0$, $\mu_j^0 = 0$, $\nu_k^0 = 0$, $  \rho_j^0, \rho_k^0 > 0$\;
\For{episode $= 1, 2, \ldots$}{
    Observe $s_0$\;
    \For{$t = 0, 1, \ldots, T-1$}{
        Select $a_t$ using the local observations\;
        Execute $a_t$, observe $r_t$, $s_{t+1}$\;
        Compute $g_k(s_t, a_t)$, $e_j(s_t, a_t)$, $c_i(s_t, a_t)$\;
        Store $(s_t, a_t, r_t, s_{t+1}, \{g_k\}, \{e_j\}, \{c_i\})$ in $\mathcal{D}$\;
        \If{Q-update time}{
            Sample minibatch from $\mathcal{D}$\;
            Update $Q_\theta$ using Eq.~\eqref{ALQ-fun}\;
        }
    }
    Estimate $\hat{V}_{c_i}(\pi)$ and update $\boldsymbol{\lambda}, \boldsymbol{\nu}, \boldsymbol{\rho}$ using Eqs.~\eqref{dualUpadate1},~\eqref{dualUpadate2}, and \eqref{dualUpadate3}\;
}
\end{algorithm}

\begin{algorithm}[!t]
\caption{Multi-Agent Safe-Deep Q-Learning for Optimizing Problem~\text{(P2)}}
\label{alg:multi_agent_safe_q_learning}
\SetAlgoLined
\SetKwInOut{Input}{Input}\SetKwInOut{Output}{Output}
\Input{Agent set $\mathcal{N}$, state spaces $\{\mathcal{S}^n\}$, action spaces $\{\mathcal{A}^n\}$, constraints $\{g_k\}, \{e_j\}, \{c_i\}$, thresholds $\{d_i\}$}
\Output{Safe policies $\{\pi_n^*\}$ for all agents $n \in \mathcal{N}$}
\For{each agent $n \in \mathcal{N}$}{
    Initialize $Q_{\theta^n}$, $Q_{\hat{\theta}^n}$, Replay Buffer $\mathcal{D}^n$\;
    Initialize local dual variables $\lambda_i^n = 0$, $\mu_j^n = 0$, $\nu_k^n = 0$, penalty factors  $\rho_j^n > 0$, $\rho_k^n > 0$\;
}
\For{episode $= 1, 2, \ldots, M$}{
    Reset environment, observe initial states $s_0^n$\;
    \For{$t = 0, 1, \ldots, T-1$}{
        \For{each agent $n \in \mathcal{N}$}{
            Select action $a_t^n$ using $\epsilon$-greedy policy on $Q_{\theta^n}(s_t^n, \cdot)$\;
        }
        Execute joint action $\mathbf{a}_t = \{a_t^1, \dots, a_t^N\}$\;
        Observe rewards $r_t^n$, next states $s_{t+1}^n$, and constraint costs $\mathbf{c}_t$, $\mathbf{g}_t$, $\mathbf{e}_t$\;
        
        \For{each agent $n \in \mathcal{N}$}{
            Store transition $(s_t^n, a_t^n, r_t^n, \mathbf{c}_t, \mathbf{g}_t, \mathbf{e}_t, s_{t+1}^n)$ in $\mathcal{D}^n$\;
            \If{update time}{
                Sample minibatch $B$ from $\mathcal{D}^n$\;
                Compute augmented Q-functions $Q^{aug}_n(s_t, a_t)$ by Eq~\eqref{ALQ-fun}; \\
                Update $Q_{\theta^n}$ by minimizing Loss Eq.~\eqref{MSEQ}\;
            }
        }
    }
    \For{each agent $n \in \mathcal{N}$}{
        \If{constraint violated}{
            Update penalty parameter 
$\boldsymbol{\lambda}^n$, $\boldsymbol{\mu}^n$, $\boldsymbol{\nu}^n$, and penalty factors ${\rho}_i^n$, ${\rho}_j^n$, ${\rho}_k^n$\;
        }
    }
}
\end{algorithm}

The algorithm enforces a strict two-timescale learning hierarchy to guarantee convergence stability.
The fast timescale governs the primal policy improvement, where Q-network parameters $\theta$ are updated at the step-level frequency within the inner loop (Lines 9--11 of \textbf{Algorithm~\ref{alg:safe_q_learning}}). 
This allows the agent to rapidly adapt to the current penalty landscape.
Conversely, the slow timescale manages constraint enforcement, where dual variables $\boldsymbol{\lambda}$, $\boldsymbol{\mu}$, $\boldsymbol{\nu}$ are updated at the episode-level frequency (Line 14 of \textbf{Algorithm~\ref{alg:safe_q_learning}}) with a learning rate $\beta$ satisfying $\beta \ll \alpha$.
This frequency and step-size separation ensures that the dual variables remain quasi-static while the primal policy converges, thereby preventing primal-dual oscillations and ensuring asymptotic optimality.

\subsection{Training regime and safety shield mechanism}
\label{TrainingRegime}
To prevent system failures during the exploration phase, the Safe-Deep Q-Learning policy is trained offline using statistical channel models and simulated kinematics. 
A practical limitation of this approach is the simulation-to-reality gap. 
Physical environments introduce unmodeled dynamics, such as aerodynamic drag and transient radio frequency interference, which can cause a pre-trained agent to generate unsafe actions during deployment. 

A safety shield is integrated at the execution layer to handle these modeling discrepancies.
During operation, the agent outputs an intended action based on local observations. 
The safety shield evaluates this action prior to physical execution using real-time telemetry data, including inter-UAV distances obtained from DAA messages. 
If a constraint violation is predicted, the shield replaces the intended command with a predefined safe fallback maneuver, such as hovering and halting transmission. 
Although this override mechanism reduces the instantaneous communication throughput, it enforces collision avoidance.
Consequently, the system maintains physical safety upon deployment without requiring continuous online policy retraining.

\subsection{Extension to Multi-Agent Optimization}
\label{subsec:multi_agent_extension}
Wireless networks are inherently distributed systems comprising multiple interacting entities (e.g., UAV swarms, vehicular platoons). 
Directly applying centralized single-agent learning to such large-scale systems is often computationally intractable due to the exponential growth of the state-action space.
To address this scalability challenge, we extend the proposed framework to a multi-agent setting, as outlined in \textbf{Algorithm~\ref{alg:multi_agent_safe_q_learning}}.
Each UAV agent $n$ maintains its own local Q-network $Q_{\theta^n}$ and replay buffer $\mathcal{D}^n$.

The agents interact with the shared environment to collect local observations $s_t^n$ and select actions $a_t^n$ independently.
However, they share the global safety constraints (e.g., collision avoidance, minimum broadcast reliability). 
To enforce these, each agent incorporates the constraint violation costs directly into its local value learning process.

Specifically, the standard Q-learning target is modified by an augmented Lagrangian penalty term in Eqs.~\eqref{ALQ-fun}.
The penalty parameters are updated adaptively at the end of each episode based on whether the cumulative constraints were satisfied, effectively creating a ``safety price” that guides all independent agents towards a globally feasible equilibrium without requiring a central controller during execution.

\section{Convergence Analysis}
\label{sec:theoretical_analysis}
In this section, we provide the theoretical convergence analysis. 
We analyze the underlying constrained optimization problem (P2) to demonstrate that the proposed dual-timescale learning framework converges to an optimal policy that satisfies both instantaneous hard constraints and cumulative soft constraints, under mild assumptions. 

\subsection{Preliminaries and Assumptions}
We first analyze the underlying constrained optimization problem. To facilitate the convergence analysis, we consider a reformulated version of (P2), denoted as (P2'), and let $P^*$ denote its optimal objective value:
\begin{align*}
\text{(P2'):}~~ &P^*=\max_{\pi} ~ V_r(\pi) = \mathbb{E}_{\pi}\left[ \sum_{t=0}^{\infty} \gamma^t r(s_t, a_t) \right] \\
\text{s.t.} \quad  & V_{c_i}(\pi) = \mathbb{E}_{\pi}\left[ \sum_{t=0}^{\infty} \gamma^t \tilde{c}_i(s_t, a_t) \right] \leq 0, ~\forall i \in \mathcal{I},  \\
&g_k(s_t, a_t) \leq 0, \quad \forall k \in \mathcal{K},~\forall t \geq 0,
\end{align*}
Here, (P2') does not differ from (P2) much, except that it drops the equality instantaneous constraints. This is because an equality can be induced by two inequalities. 

We adopt the following standard assumptions~\cite{9683088,10.5555/3454287.3454966}:
\begin{assumption}[cf.~{\cite[Assumption~1]{9683088}}]
Suppose the feasible policy set for (P2) has a non-empty relative interior. Namely, (P2) satisfies Slater's condition. Furthermore, we suppose that any policy $\pi$ satisfying the aggregated constraint $\mathbb{E}_\pi[\sum \gamma^t g_k(s_t, a_t)] \leq 0$ also satisfes the per-step constraint $\mathbb{E}[g_k(s_t, a_t)] \leq 0, \forall t \in \{0,1,... \}$. 
\label{ass:slater}
\end{assumption}

\begin{assumption}[Boundedness]
\label{ass:bounded}
The reward $r$, cumulative cost functions $\tilde{c}_i$, and constraint functions $g_k$ are bounded by a constant $B < \infty$.
\end{assumption}

\begin{proposition}
    Under Assumption~\ref{ass:slater}, strong duality holds for (P2'). That is, 
    \begin{equation*}
        P^* = \min_{\boldsymbol{\lambda} \geq 0,\boldsymbol{\nu} \geq 0} d(\boldsymbol{\lambda},\boldsymbol{\nu}).
    \end{equation*}
    Here, 
    \begin{align}
    d(\boldsymbol{\lambda},\boldsymbol{\nu}) \coloneqq 
    \max_{\pi} \mathbb{E}_{\pi} \left[ \sum_{t=0}^\infty \gamma^t \left( r - \sum_{i \in \mathcal{I}}\lambda_i\tilde{c}_i - \sum_{k\in \mathcal{K}} \nu_k g_k \right) \right]  \notag 
\end{align}
denotes the Lagrangian of (P2'). 
\end{proposition}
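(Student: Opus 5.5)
The plan is the standard occupancy-measure argument for CMDPs (Altman; Paternain et al.). The first step is to reduce the per-step constraints $g_k(s_t,a_t)\le 0$ to aggregated discounted constraints $V_{g_k}(\pi)\coloneqq\mathbb{E}_\pi[\sum_t\gamma^t g_k(s_t,a_t)]\le 0$. Any policy feasible for the per-step form is feasible for the aggregated form. Conversely, the second part of Assumption~\ref{ass:slater} says aggregated feasibility implies per-step feasibility in expectation. So (P2') has the same value as the problem with $|\mathcal{I}|+|\mathcal{K}|$ discounted constraints $V_{\tilde c_i}(\pi)\le 0$, $V_{g_k}(\pi)\le 0$. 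This matches the Lagrangian $d(\boldsymbol{\lambda},\boldsymbol{\nu})$ in the statement, because
\[
d(\boldsymbol\lambda,\boldsymbol\nu)=\max_\pi V_r(\pi)-\sum_i\lambda_i V_{\tilde c_i}(\pi)-\sum_k\nu_k V_{g_k}(\pi).
\]

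Next, I would pass to discounted occupancy measures $\rho_\pi(s,a)=(1-\gamma)\sum_t\gamma^t\Pr_\pi(s_t=s,a_t=a)$. Every value above is linear in $\rho_\pi$: $V_f(\pi)=\frac{1}{1-\gamma}\langle\rho_\pi,f\rangle$. The set $\mathcal{R}=\{\rho_\pi\}$ over all (possibly randomized, history-dependent) policies is convex. It equals the set of nonnegative $\rho$ satisfying the Bellman flow constraints, and every such $\rho$ is realized by a stationary policy. By Assumption~\ref{ass:bounded}, all integrands are bounded by $B$. Hence the value vector $\big(V_r,V_{\tilde c},V_g\big)(\pi)$ ranges over the convex set $\mathcal{C}$ obtained as the linear image of $\mathcal{R}$. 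For general spaces I would instead use the convexity argument of \cite{10.5555/3454287.3454966}: the set of achievable (reward, constraint) value vectors is convex, by mixing policies at the trajectory level.

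With this, (P2') becomes $\max\{v_0:(v_0,\mathbf v)\in\mathcal{C},\ \mathbf v\le 0\}$, a convex program in the image space. Weak duality $P^*\le d(\boldsymbol\lambda,\boldsymbol\nu)$ for all $\boldsymbol\lambda,\boldsymbol\nu\ge0$ is immediate: for feasible $\pi$ the penalty terms are nonpositive. For the reverse inequality, I would apply the separating hyperplane theorem to two sets. The first is $\mathcal{A}=\{(v_0,\mathbf v): \exists (u_0,\mathbf u)\in\mathcal{C},\ v_0\le u_0,\ \mathbf v\ge \mathbf u\}$. The second is $\mathcal{B}=\{(v_0,\mathbf v): v_0>P^*,\ \mathbf v\le0\}$. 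Both are convex and disjoint. This yields a nonzero $(\eta_0,\boldsymbol\eta)\ge0$ with $\eta_0 u_0-\boldsymbol\eta^\top\mathbf u\le \eta_0P^*$ on $\mathcal{C}$.

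The key step, and the main obstacle, is showing $\eta_0>0$. Here I would invoke Slater's condition from Assumption~\ref{ass:slater}. Suppose $\eta_0=0$. Then for a strictly feasible $\bar\pi$ we get $-\boldsymbol\eta^\top\mathbf u(\bar\pi)>0$, which contradicts the separation inequality $-\boldsymbol\eta^\top\mathbf u\le 0$. Normalizing then gives multipliers $(\boldsymbol\lambda^*,\boldsymbol\nu^*)=\boldsymbol\eta/\eta_0$ with $d(\boldsymbol\lambda^*,\boldsymbol\nu^*)\le P^*$. Combined with weak duality, this shows the minimum is attained and equals $P^*$.

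One subtlety needs care: strict feasibility for the per-step constraints must transfer to the aggregated constraints. I would handle this by reading Slater's condition directly in the aggregated form, which is what the first part of Assumption~\ref{ass:slater} supplies. Boundedness (Assumption~\ref{ass:bounded}) guarantees that $\mathcal{C}$ is bounded, so $P^*$ and $d$ are finite and the maximum over $\pi$ is attained.
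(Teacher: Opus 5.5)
Your proposal is correct and follows the route the paper takes implicitly. The paper gives no written proof; it relies on the second part of Assumption~\ref{ass:slater} to identify the per-step constraints, in the in-expectation sense you also use, with aggregated discounted constraints, and then on the CMDP zero-duality-gap result of \cite{10.5555/3454287.3454966}, whose convexity-plus-Slater argument you have written out as a separating hyperplane in the finite-dimensional value space. One small overstatement: boundedness makes $P^*$ and $d(\boldsymbol{\lambda},\boldsymbol{\nu})$ finite but does not by itself make the maximum over $\pi$ attained (that needs, e.g., finite spaces), though your argument only uses suprema, so nothing breaks.
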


\begin{remark}[Mathematical and Physical Justification]
Assumption~\ref{ass:slater} is fundamental to the convergence analysis as it establishes \textit{Strong Duality} for the non-convex CMDP. Physically, Slater's condition implies the existence of a strictly feasible ``fallback” policy (e.g., minimum power transmission or UAV hovering) that respects all safety boundaries.
Crucially, by postulating that cumulative feasibility implies instantaneous feasibility in expectation~\cite{9683088}, this assumption ensures the duality gap vanishes, thereby validating the use of the augmented Lagrangian method for solving Problem \text{(P2)}. 
Assumption~\ref{ass:bounded} is inherently satisfied in wireless systems due to strict hardware limitations (e.g., peak transmit power $P_{\max}$) and the finite dynamic range of channel fading, ensuring that the dual variables and penalty terms in the proof remain bounded.
\end{remark}

\begin{remark}
    Assumption~\ref{ass:slater} does look stringent. However, this could hold, as one can use the "clipping" method, or to restrict the feasible policy class, both proposed in \cite[Proposition 1]{9683088}. This work \cite{9683088} also provides a reasonable endorsement for Assumption~\ref{ass:slater}.
\end{remark}

\subsection{Augmented Lagrangian and Zero Duality Gap}
Recall that $d(\boldsymbol{\lambda}, \boldsymbol{\nu})$ is the standard dual function. Now we define $d_\rho(\boldsymbol{\lambda}, \boldsymbol{\nu})$ as the augmented dual function. That is, 
\begin{align}
    &d_{\rho}(\boldsymbol{\lambda},\boldsymbol{\nu}) \coloneqq  \notag \\ &
    \max_{\pi} \mathbb{E}_{\pi} \left[ \sum_{t=0}^\infty \gamma^t \left( r - \sum_{i \in \mathcal{I}}\lambda_i\tilde{c}_i - \sum_{k\in \mathcal{K}}( \nu_k g_k^+ +\frac{\rho}{2} (g_k^+)^2 )  \right) \right]. \notag
\end{align} 

Based on Assumption \ref{ass:slater}, we have the following lemma (proven via standard ALM theory):
\begin{lemma}[Zero Duality Gap]
Let $(\boldsymbol{\lambda}^*, \boldsymbol{\nu}^*)$ be the optimal dual variables of the original problem. For any $\rho \geq 0$, let $(\boldsymbol{\lambda}^*_\rho, \boldsymbol{\nu}^*_\rho) = \arg \min d_\rho(\boldsymbol{\lambda}, \boldsymbol{\nu})$. Then
\begin{equation*}
    d_\rho(\boldsymbol{\lambda}^*_\rho, \boldsymbol{\nu}^*_\rho) = d(\boldsymbol{\lambda}^*, \boldsymbol{\nu}^*) = P^*.
\end{equation*}
Here, $(\boldsymbol{\lambda}^*_\rho, \boldsymbol{\nu}^*_\rho) = \arg \min d_{\rho}(\boldsymbol{\lambda}, \boldsymbol{\nu})$, and $(\boldsymbol{\lambda}^*, \boldsymbol{\nu}^*) = \arg \min d(\boldsymbol{\lambda}, \boldsymbol{\nu}).$
\end{lemma}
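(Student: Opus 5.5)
The plan is to prove the two inequalities $d_\rho(\boldsymbol{\lambda}^*_\rho,\boldsymbol{\nu}^*_\rho)\ge P^*$ and $d_\rho(\boldsymbol{\lambda}^*_\rho,\boldsymbol{\nu}^*_\rho)\le P^*$ separately. Equality $d(\boldsymbol{\lambda}^*,\boldsymbol{\nu}^*)=P^*$ then follows directly from the strong duality proposition stated above.

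\textbf{Lower bound (weak duality for the augmented dual).} Fix any $\boldsymbol{\lambda}\ge 0$, $\boldsymbol{\nu}\ge 0$ and $\rho\ge 0$. Take any policy $\pi$ feasible for (P2'). Along its trajectories $g_k(s_t,a_t)\le 0$ holds for every $t$, so $g_k^+=0$. Hence both the linear term $\nu_k g_k^+$ and the quadratic term $\frac{\rho}{2}(g_k^+)^2$ vanish. Moreover $\sum_i \lambda_i V_{c_i}(\pi)\le 0$, since each $V_{c_i}(\pi)\le 0$. Therefore the augmented objective evaluated at $\pi$ is at least $V_r(\pi)$. Since $d_\rho$ is a maximum over all policies, we get $d_\rho(\boldsymbol{\lambda},\boldsymbol{\nu})\ge V_r(\pi)$. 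Taking the supremum over feasible $\pi$ gives $d_\rho(\boldsymbol{\lambda},\boldsymbol{\nu})\ge P^*$ for all $(\boldsymbol{\lambda},\boldsymbol{\nu})$, and in particular at the minimizer $(\boldsymbol{\lambda}^*_\rho,\boldsymbol{\nu}^*_\rho)$. Assumption~\ref{ass:bounded} guarantees that all expectations are finite, so these manipulations are legitimate.

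\textbf{Upper bound (compare with the standard dual).} Here we exhibit one multiplier pair at which $d_\rho\le P^*$. For this we need a version of strong duality for the Lagrangian written with $g_k^+$, namely $\tilde d(\boldsymbol{\lambda},\boldsymbol{\nu})=\max_\pi \mathbb{E}_\pi[\sum_t\gamma^t(r-\sum_i\lambda_i\tilde c_i-\sum_k\nu_k g_k^+)]$. The constraint $g_k\le 0$ is equivalent to $g_k^+\le 0$, and $g_k^+$ is again bounded by $B$. Applying the strong duality proposition to (P2') with $g_k$ replaced by $g_k^+$ therefore gives multipliers $(\bar{\boldsymbol{\lambda}},\bar{\boldsymbol{\nu}})$ with $\tilde d(\bar{\boldsymbol{\lambda}},\bar{\boldsymbol{\nu}})=P^*$. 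Since $\frac{\rho}{2}(g_k^+)^2\ge 0$, the integrand of $d_\rho$ is pointwise no larger than that of $\tilde d$, so $d_\rho(\bar{\boldsymbol{\lambda}},\bar{\boldsymbol{\nu}})\le \tilde d(\bar{\boldsymbol{\lambda}},\bar{\boldsymbol{\nu}})=P^*$. Minimality of $(\boldsymbol{\lambda}^*_\rho,\boldsymbol{\nu}^*_\rho)$ then yields $d_\rho(\boldsymbol{\lambda}^*_\rho,\boldsymbol{\nu}^*_\rho)\le P^*$. Combining the two bounds gives $d_\rho(\boldsymbol{\lambda}^*_\rho,\boldsymbol{\nu}^*_\rho)=P^*=d(\boldsymbol{\lambda}^*,\boldsymbol{\nu}^*)$.

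\textbf{Main obstacle.} The hardest step is justifying strong duality for the $g_k^+$ reformulation. Two points need checking. First, Slater's condition fails literally for $g_k^+\le 0$, because $g_k^+$ can never be strictly negative. Second, we need the second part of Assumption~\ref{ass:slater}, which lets per-step constraints be aggregated into a discounted-sum constraint. My first attempt would avoid the $g^+$ Slater issue by working with $g_k$ itself. Take the optimal multipliers $(\boldsymbol{\lambda}^*,\boldsymbol{\nu}^*)$ of $d$ and compare $d_\rho(\boldsymbol{\lambda}^*,\boldsymbol{\nu}^*)$ with $d(\boldsymbol{\lambda}^*,\boldsymbol{\nu}^*)$ by showing $-\nu^*_k g_k^+-\frac{\rho}{2}(g_k^+)^2\le -\nu^*_k g_k$ whenever $g_k>0$. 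This inequality holds trivially, since $g_k^+=g_k$ there and the quadratic term is nonnegative. When $g_k\le 0$, however, the inequality runs the wrong way, because $-\nu_k^* g_k\ge 0$. Closing this gap is exactly where the Lagrangian-duality framework of~\cite{9683088} would be invoked: it treats instantaneous constraints as aggregated constraints with zero duality gap, so that the duality statement can be transferred to the $g_k^+$ form used in the upper bound.
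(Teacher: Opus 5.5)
Your lower bound is correct. The upper bound, however, is not established as written, and the reason is a sign slip. Your main route applies the strong-duality Proposition to the $g_k^+$ reformulation, which you yourself note is not licensed, since Slater's condition cannot hold for $g_k^+\le 0$. The fallback you then abandon is actually the right argument. When $g_k\le 0$, the left-hand side $-\nu_k^* g_k^+-\frac{\rho}{2}(g_k^+)^2$ equals $0$ and the right-hand side $-\nu_k^* g_k$ is $\ge 0$. So the inequality goes the way you need, not the wrong way. In fact, for every $\nu_k\ge 0$ and every $(s,a)$,
\[
-\nu_k g_k^+ - \tfrac{\rho}{2}(g_k^+)^2 \;\le\; -\nu_k g_k ,
\]
simply because $g_k^+\ge g_k$ and the quadratic term is nonnegative.

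Summing along trajectories, taking expectations and maximizing over $\pi$ gives $d_\rho(\boldsymbol{\lambda},\boldsymbol{\nu})\le d(\boldsymbol{\lambda},\boldsymbol{\nu})$ on the whole nonnegative orthant. Combined with your weak-duality bound and the Proposition, this yields
\[
P^*\;\le\; d_\rho(\boldsymbol{\lambda}^*_\rho,\boldsymbol{\nu}^*_\rho)\;\le\; d_\rho(\boldsymbol{\lambda}^*,\boldsymbol{\nu}^*)\;\le\; d(\boldsymbol{\lambda}^*,\boldsymbol{\nu}^*)\;=\;P^*.
\]
You need neither strong duality for the $g^+$ problem nor any further appeal to \cite{9683088}. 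This chain also shows that $(\boldsymbol{\lambda}^*,\boldsymbol{\nu}^*)$ is itself a minimizer of $d_\rho$ for every $\rho$, so the $\arg\min$ in the statement is nonempty. The paper gives no argument beyond ``standard ALM theory''. With this one-line correction, your weak-duality-plus-comparison argument becomes a complete, self-contained proof that rests only on the strong-duality Proposition.
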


\subsection{Convergence of Instantaneous Constraints}
We analyze the behavior of the optimal policy $\pi_\rho^*$ for the augmented problem as the penalty parameter $\rho \to \infty$.

\begin{theorem}[Asymptotic Feasibility]
\label{thm:instantaneous}
As $\rho \to \infty$, the optimal policy of the augmented problem satisfies the instantaneous constraints. Specifically:
\begin{equation*}
    \lim_{\rho \to \infty} \Psi(\pi_{\rho}^*)= \mathbb{E}_{\pi_\rho^*} \left[ \sum_{t=0}^\infty \gamma^t \cdot (g_k^+(s_t, a_t))^2 \right] = 0.
\end{equation*}
Here, $\pi_\rho^* = \arg \max_{\pi} R_\rho(\pi, \boldsymbol{\lambda}^*_\rho, \boldsymbol{\nu}^*_\rho)$. 

\end{theorem}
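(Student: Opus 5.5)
The plan is a standard penalty-method comparison argument. It plays $\pi_\rho^*$ against a feasible comparator policy, using the boundedness in Assumption~\ref{ass:bounded} and the zero-duality-gap identity of the Lemma to control the non-quadratic terms.

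Fix an optimal feasible policy $\bar\pi$ of (P2'), which exists or can be approximated within $\epsilon$ by Assumption~\ref{ass:slater}. Because $g_k(s_t,a_t)\le 0$ along its trajectories, $g_k^+\equiv 0$, so both the linear $\nu_k g_k^+$ and the quadratic $\frac{\rho}{2}(g_k^+)^2$ penalties vanish for $\bar\pi$. Hence
\[
R_\rho(\bar\pi,\boldsymbol{\lambda}^*_\rho,\boldsymbol{\nu}^*_\rho)=V_r(\bar\pi)-\textstyle\sum_i\lambda^*_{\rho,i}V_{\tilde c_i}(\bar\pi)\ge P^*,
\]
where the inequality uses $V_{\tilde c_i}(\bar\pi)\le 0$ and $\lambda\ge0$.

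By optimality of $\pi_\rho^*$ for $R_\rho$, we get $R_\rho(\pi_\rho^*,\cdot)\ge R_\rho(\bar\pi,\cdot)\ge P^*$. Expanding the left side and writing $\Psi(\pi)=\mathbb{E}_\pi[\sum_t\gamma^t\sum_k (g_k^+)^2]$ gives
\[
\frac{\rho}{2}\Psi(\pi_\rho^*)\le V_r(\pi_\rho^*)-\textstyle\sum_i\lambda^*_{\rho,i}V_{\tilde c_i}(\pi_\rho^*)-\sum_k\nu^*_{\rho,k}\mathbb{E}[\sum_t\gamma^t g_k^+]-P^*.
\]
The $\nu$-term is nonpositive and can be dropped. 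Assumption~\ref{ass:bounded} gives $V_r\le B/(1-\gamma)$ and $|V_{\tilde c_i}|\le B/(1-\gamma)$. The bound is therefore $\Psi(\pi_\rho^*)\le \frac{2}{\rho}\big(\frac{B}{1-\gamma}(1+\sum_i\lambda^*_{\rho,i})-P^*\big)$. This tends to $0$ provided $\sum_i\lambda^*_{\rho,i}$ stays bounded, or at least grows sublinearly in $\rho$. Since $\Psi\ge0$, the limit follows.

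The main obstacle is controlling $\boldsymbol{\lambda}^*_\rho$ uniformly in $\rho$. The plan for this step has three parts.

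\textbf{Slater policy.} Take the strictly feasible policy $\tilde\pi$ from Assumption~\ref{ass:slater}, with margin $\delta>0$ in each cumulative constraint and $g_k^+\equiv0$.

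\textbf{Dual-level-set bound.} Using the Lemma, $d_\rho(\boldsymbol{\lambda}^*_\rho,\boldsymbol{\nu}^*_\rho)=P^*$. Evaluating the inner max at $\tilde\pi$ gives $P^*\ge V_r(\tilde\pi)+\delta\sum_i\lambda^*_{\rho,i}$, which yields the classical bound $\sum_i\lambda^*_{\rho,i}\le (P^*-V_r(\tilde\pi))/\delta\le 2B/((1-\gamma)\delta)$, independent of $\rho$.

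\textbf{Fallback.} If the argmin is not unique, I would select the minimizer in this bounded level set. If that selection fails, I would instead replace $\boldsymbol{\lambda}^*_\rho$ by the fixed $\boldsymbol{\lambda}^*$, which is finite by the Proposition.

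Combining this bound with the previous display gives $\Psi(\pi_\rho^*)=O(1/\rho)\to0$, which proves Theorem~\ref{thm:instantaneous}.
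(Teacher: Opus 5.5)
Your proof is correct. Its skeleton matches the paper's: isolate $\frac{\rho}{2}\Psi(\pi_\rho^*)$ and bound everything else uniformly in $\rho$. The way you control the remaining terms, however, is genuinely different.

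The paper gets the \emph{equality} $P^*=R_\rho(\pi_\rho^*,\boldsymbol{\lambda}^*_\rho,\boldsymbol{\nu}^*_\rho)$ from Lemma~1. It handles the $\boldsymbol{\nu}$-term by a case split on $\nu^*_{\rho,k}$ versus $\nu^*_k$, and simply asserts that $\boldsymbol{\lambda}^*_\rho$ ``stabilizes'' so the right-hand side stays bounded. It then concludes by contradiction.

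You instead obtain the \emph{inequality} $R_\rho(\pi_\rho^*,\cdot)\ge P^*$ from optimality of $\pi_\rho^*$ against a pathwise-feasible comparator. You discard the $\boldsymbol{\nu}$-term by sign, and you actually prove $\sum_i\lambda^*_{\rho,i}\le (P^*-V_r(\tilde\pi))/\delta$ by the classical Slater argument. This buys three things:
\begin{itemize}
\item an explicit rate $\Psi(\pi_\rho^*)=O(1/\rho)$ instead of a bare limit;
\item a proof of the one step the paper leaves unjustified;
\item no need for any control of $\boldsymbol{\nu}^*_\rho$.
\end{itemize}
The last point matters. Since $g_k^+\ge 0$, $d_\rho$ is nonincreasing in $\boldsymbol{\nu}$, so its minimizer set is upward-closed in $\boldsymbol{\nu}$. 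The paper's Case~1 ($\nu^*_{\rho,k}>\nu^*_k$) therefore cannot actually be excluded by contradiction.

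You can even drop Lemma~1. The bound $d_\rho(\boldsymbol{\lambda}^*_\rho,\boldsymbol{\nu}^*_\rho)\le d_\rho(\mathbf{0},\mathbf{0})\le B/(1-\gamma)$ already feeds the Slater argument. The Slater policy $\tilde\pi$ itself can also serve as the comparator, since optimality of $\bar\pi$ is never used. The whole argument then rests on Assumptions~\ref{ass:slater} and~\ref{ass:bounded} alone.

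Two small remarks.
\begin{itemize}
\item Your fallback is superfluous, and its second branch would be harmful. The Slater bound holds for \emph{every} minimizer of $d_\rho$, indeed for every point of the sublevel set $\{d_\rho\le P^*\}$, so no selection is needed. Replacing $\boldsymbol{\lambda}^*_\rho$ by $\boldsymbol{\lambda}^*$, by contrast, changes the policy $\pi_\rho^*$ the theorem is about.
\item State explicitly how you read Assumption~\ref{ass:slater}: as supplying a policy with strict slack in the cumulative constraints \emph{and} $g_k^+\equiv 0$ almost surely along trajectories. With only per-step feasibility in expectation (as in the second half of that assumption), $\frac{\rho}{2}\Psi(\tilde\pi)$ would not vanish. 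Your argument would then no longer yield a $\rho$-independent bound on $\boldsymbol{\lambda}^*_\rho$.
\end{itemize}
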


\begin{IEEEproof}
From Lemma 1, we know $d_\rho(\boldsymbol{\lambda}^*_\rho, \boldsymbol{\nu}^*_\rho) = P^*, \forall \rho \geq 0$. Expanding the augmented dual function:
\begin{align*}
    P^* &= V_0(\pi_\rho^*) - \boldsymbol{\lambda}^*_\rho V_c(\pi_\rho^*) - \boldsymbol{\nu}^*_\rho G(\pi_\rho^*) - \frac{\rho}{2}\Psi(\pi_\rho^*), 
\end{align*}
where $G_k(\pi) = \mathbb{E}_\pi[\sum_{t=0}^{\infty} \gamma^{t} g_k^+(s_t, a_t) ]$.
We consider the magnitude of the augmented dual variable $\boldsymbol{\nu}^*_\rho$ relative to the original optimal dual $\boldsymbol{\nu}^*$:

\textbf{Case 1:} $\nu^*_{\rho, k} > \nu^*_k$.
By the properties of the dual function and the quadratic penalty term (which is non-negative), using a larger dual variable in the surrogate function would imply $d_\rho(\boldsymbol{\lambda}^*_\rho, \boldsymbol{\nu}^*_\rho) < d(\boldsymbol{\lambda}^*, \boldsymbol{\nu}^*) = P^*$, which contradicts Lemma 1.

\textbf{Case 2:} $\nu^*_{\rho, k} \leq \nu^*_k$.
In this case, $\|\boldsymbol{\nu}^*_\rho\|$ is bounded by $\|\boldsymbol{\nu}^*\|$. Using the boundedness of $V_0, V_c, G$ (Assumption \ref{ass:bounded}, bounded by $B$):
\begin{align*}
    P^* &\leq B + \|\boldsymbol{\lambda}^*_\rho\| B + \|\boldsymbol{\nu}^*\| B - \frac{\rho}{2}\delta.
\end{align*}
Rearranging the terms, we get:
\begin{equation*}
    \frac{\rho}{2}\delta \leq B(1 + \|\boldsymbol{\lambda}^*_\rho\| + \|\boldsymbol{\nu}^*\|) - P^*.
\end{equation*}
As $\rho \to \infty$, the LHS approaches $\infty$ while the RHS remains bounded (since dual variables of the cumulative constraints $\boldsymbol{\lambda}^*_\rho$ stabilize for sufficient $\rho$). This leads to a contradiction.
Thus, $\lim_{\rho \to \infty} \Psi(\pi_\rho^*) = 0$.
\end{IEEEproof}

\subsection{Convergence of Cumulative Constraints}
We now address the cumulative constraints handled by the primal-dual updates on $\boldsymbol{\lambda}$.

\begin{theorem}[Cumulative Constraint Satisfaction]
\label{thm:Cumulative}
Let $\{\pi^{(s)}\}_s$ be the sequence of policies generated by any primal-dual algorithm. Then there exists a subsequence converging to a policy feasible to cumulative constraints, i.e.,
\begin{equation*}
    \liminf_{s \to \infty} V_{c_i}(\pi^{(s)}) \leq 0, \quad \forall i \in \mathcal{I}.
\end{equation*}
\end{theorem}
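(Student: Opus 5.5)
The plan is to argue by contradiction through the projected dual ascent \eqref{dualUpadate1}. The key observation is that persistent violation of a cumulative constraint forces the multiplier $\lambda_i$ to grow without bound. A primal step that (approximately) maximizes the Lagrangian would then make the policy strictly feasible, because Slater's condition (Assumption~\ref{ass:slater}) guarantees a strictly feasible comparison policy.

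\textbf{Setting up the contradiction.} Suppose the claim fails for some $i\in\mathcal{I}$. Then there are $\epsilon>0$ and $s_0$ such that $V_{c_i}(\pi^{(s)})\ge \epsilon$ for all $s\ge s_0$. Since $V_{c_i}(\pi^{(s)})>0$, the projection $[\cdot]_+$ in \eqref{dualUpadate1} is inactive. Hence
\begin{equation*}
\lambda_i^{(s+1)} \ge \lambda_i^{(s)} + \beta_\lambda \epsilon ,
\end{equation*}
and so $\lambda_i^{(s)} \ge \lambda_i^{(s_0)} + (s-s_0)\beta_\lambda\epsilon \to \infty$. I treat $\hat V_{c_i}$ as exact here; with a stochastic estimator one would add a martingale-noise argument, using boundedness from Assumption~\ref{ass:bounded}.

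\textbf{Exploiting the primal step.} Next I use that the primal step (the fast timescale) returns a policy that maximizes the Lagrangian
\begin{equation*}
L(\pi,\boldsymbol{\lambda}^{(s)},\boldsymbol{\nu}^{(s)}) = V_r(\pi) - \sum_{i'} \lambda_{i'}^{(s)} V_{c_{i'}}(\pi) - (\text{penalty terms}),
\end{equation*}
up to an error $\eta_s$. Let $\bar\pi$ be a Slater policy, with $V_{c_{i'}}(\bar\pi)\le -\sigma<0$ for all $i'$ and $g_k\le 0$, so that its instantaneous penalty terms vanish. Comparing $\pi^{(s)}$ with $\bar\pi$ and bounding rewards by $B/(1-\gamma)$ gives
\begin{equation*}
\sum_{i'} \lambda_{i'}^{(s)}\bigl(V_{c_{i'}}(\pi^{(s)})+\sigma\bigr) \le \frac{2B}{1-\gamma} + \eta_s .
\end{equation*}
Here I also drop the nonnegative penalty terms of $\pi^{(s)}$. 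The left side contains the term $\lambda_i^{(s)}(\epsilon+\sigma)\to\infty$.

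\textbf{Main obstacle.} The difficulty is the other indices $i'$, whose terms $\lambda_{i'}^{(s)}(V_{c_{i'}}(\pi^{(s)})+\sigma)$ could be very negative. I would handle this by running the same argument on the set $\mathcal{I}_\infty$ of indices whose multipliers diverge. For $i'\notin\mathcal{I}_\infty$, the multipliers stay bounded, so those terms are bounded below by $-\sup_s\lambda_{i'}^{(s)}\,B/(1-\gamma)$.

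For $i'\in\mathcal{I}_\infty$, a divergent multiplier can only come from repeated positive increments, which occur only when $V_{c_{i'}}>0$. I would pass to a subsequence along which every such $V_{c_{i'}}(\pi^{(s)})+\sigma\ge \sigma>0$. This holds when the increments are taken where the multipliers are at running maxima, or alternatively one can average the dual updates over a window and use the telescoping bound
\begin{equation*}
\sum_{s}\beta_\lambda V_{c_{i'}}(\pi^{(s)}) \ge \lambda_{i'}^{(S)}-\lambda_{i'}^{(0)}.
\end{equation*}
Along that subsequence the left side of the comparison inequality is $\ge \lambda_i^{(s)}(\epsilon+\sigma) - C \to\infty$, which contradicts the bounded right side, provided $\eta_s$ stays bounded. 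This last requirement is exactly the two-timescale assumption $\beta\ll\alpha$.

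Hence for every $i$ and every $\epsilon>0$ the iterates infinitely often satisfy $V_{c_i}(\pi^{(s)})<\epsilon$, which yields $\liminf_{s\to\infty} V_{c_i}(\pi^{(s)})\le 0$. Extracting a common subsequence for all $i$ simultaneously is the last delicate point; I would use the Cesàro-averaged version of the comparison inequality, which controls $\max_i V_{c_i}$ jointly, to obtain it.
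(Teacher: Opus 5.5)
Your skeleton is the paper's own: assume $V_{c_i}(\pi^{(s)})\ge\epsilon$ eventually, let \eqref{dualUpadate1} push $\lambda_i^{(s)}\to\infty$, and compare the Lagrangian maximizer with a Slater policy. The paper's proof silently drops the other multipliers. Its claim that any policy with $V_{c_i}(\pi)>0$ has objective tending to $-\infty$ ignores that $-\lambda_{i'}^{(s)}V_{c_{i'}}(\pi)$ may tend to $+\infty$ at the same time. So you are right that this is the crux, but your patch does not close it.

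Running maxima give, for each single index, infinitely many iterations at which that constraint is violated. Nothing in the dual dynamics makes these record times coincide across indices. Suppose two further multipliers diverge because their costs alternate between $+2$ and $-1$ out of phase. Then at every iteration one of them has $V_{c_{i'}}(\pi^{(s)})+\sigma<0$ (for $\sigma<1$), and the subsequence you want does not exist.

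The windowed alternative fails too. Because of the projection, the telescoping inequality runs the other way: $\lambda_{i'}^{(S)}-\lambda_{i'}^{(0)}\ge\beta_\lambda\sum_{s<S}V_{c_{i'}}(\pi^{(s)})$. In any case it controls sums of $V_{c_{i'}}$, not the products $\lambda_{i'}^{(s)}V_{c_{i'}}(\pi^{(s)})$ that appear in your comparison inequality. Also, ``bounded'' versus ``divergent'' is not an exhaustive split: a multiplier can be unbounded without tending to infinity.

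The missing idea is to handle the whole multiplier vector through the potential $\|\boldsymbol{\lambda}\|_2^2$. With $\eta=\sup_s\eta_s$, your comparison inequality says exactly $\boldsymbol{\lambda}^{(s)\top}V_c(\pi^{(s)})\le\frac{2B}{1-\gamma}+\eta-\sigma\|\boldsymbol{\lambda}^{(s)}\|_1$. Since $[\cdot]_+$ does not increase absolute values,
\begin{equation*}
\|\boldsymbol{\lambda}^{(s+1)}\|_2^2-\|\boldsymbol{\lambda}^{(s)}\|_2^2\le 2\beta_\lambda\Bigl(\frac{2B}{1-\gamma}+\eta-\sigma\|\boldsymbol{\lambda}^{(s)}\|_1\Bigr)+\frac{\beta_\lambda^2|\mathcal{I}|B^2}{(1-\gamma)^2},
\end{equation*}
which is negative once $\|\boldsymbol{\lambda}^{(s)}\|_1$ exceeds a fixed threshold. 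Hence $\boldsymbol{\lambda}^{(s)}$ stays uniformly bounded, which contradicts $\lambda_i^{(s)}\to\infty$ with no case split and no subsequence. Equivalently, boundedness together with $\frac1S\sum_{s<S}V_{c_i}(\pi^{(s)})\le\lambda_i^{(S)}/(\beta_\lambda S)\to0$ gives $\liminf_s V_{c_i}(\pi^{(s)})\le 0$ directly.

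Do not try to extract a common feasible subsequence. Ces\`aro averaging certifies the averaged (mixture) policy, not the iterates, and a jointly feasible subsequence need not exist. Take a one-state CMDP with actions $A,B$ (reward $1$, per-step costs $(1,-1)$ and $(-1,1)$) and a Slater action $C$ (reward $0$, costs $(-\delta,-\delta)$, $0<\delta\le 1$). Greedy best responses from $\boldsymbol{\lambda}^{(0)}=0$, with ties broken toward $A$, alternate $A,B,A,\dots$. Every iterate then violates one constraint, although $\liminf_s V_{c_i}(\pi^{(s)})<0$ for each $i$. So only the per-index inequality displayed in the theorem is provable. Finally, $\sup_s\eta_s<\infty$ is an extra hypothesis (the paper assumes exact maximization), not a consequence of $\beta\ll\alpha$.
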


\begin{IEEEproof}
Assume the contrary: $\liminf_{s \to \infty} V_{c_i}(\pi^{(s)}) \geq \epsilon > 0$ for some constraint $i$.
The dual update rule is $\lambda_i^{(s+1)} = [\lambda_i^{(s)} + \beta_s V_{c_i}(\pi^{(s)})]_+$.
If the violation consistently exceeds $\epsilon$, the dual variable diverges: $\lambda_i^{(s)} \to \infty$ (since $\beta_s$ can be chosen to satisfy Robbins–Monro condition, thus $\sum \beta_s = \infty$).
Consider the objective function maximized by $\pi^{(s)}$:
\begin{equation*}
    \max_\pi \left( V_0(\pi) - \lambda_i^{(s)} V_{c_i}(\pi) - \text{other penalty terms} \right).
\end{equation*}
As $\lambda_i^{(s)} \to \infty$, the term $-\lambda_i^{(s)} V_{c_i}(\pi)$ dominates. Any policy with $V_{c_i}(\pi) > 0$ will yield an objective value approaching $-\infty$.
However, we know that a feasible policy exists (Slater's condition) which gives a finite objective value. Thus, the optimization step must eventually select a policy that satisfies the constraint to avoid the infinite penalty, contradicting the assumption that $V_{c_i} > \epsilon$ for all $s$.
\end{IEEEproof}

\begin{remark}[Holistic Convergence and Architectural Implications]
The combination of Theorem~\ref{thm:instantaneous} and Theorem~\ref{thm:Cumulative} provides a complete theoretical guarantee for the proposed Safe-Deep Q-Learning framework. 
Theorem~\ref{thm:instantaneous} confirms that the quadratic penality coefficients asymptotically force the policy into the safe region regarding hard constraints (e.g., collision avoidance), while Theorem~\ref{thm:Cumulative} ensures that the dual ascent stabilizes the long-term resource budgeting (e.g., energy limits). 
This validates our dual-timescale design rationale: by mathematically decoupling the handling of immediate physical safety from cumulative resource management, the algorithm eliminates the primal-dual oscillations common in constrained RL, ensuring convergence to a policy that is both physically safe and asymptotically optimal.
\end{remark}

\begin{figure}[!tbp]
\centerline{\includegraphics[width=.475\textwidth]{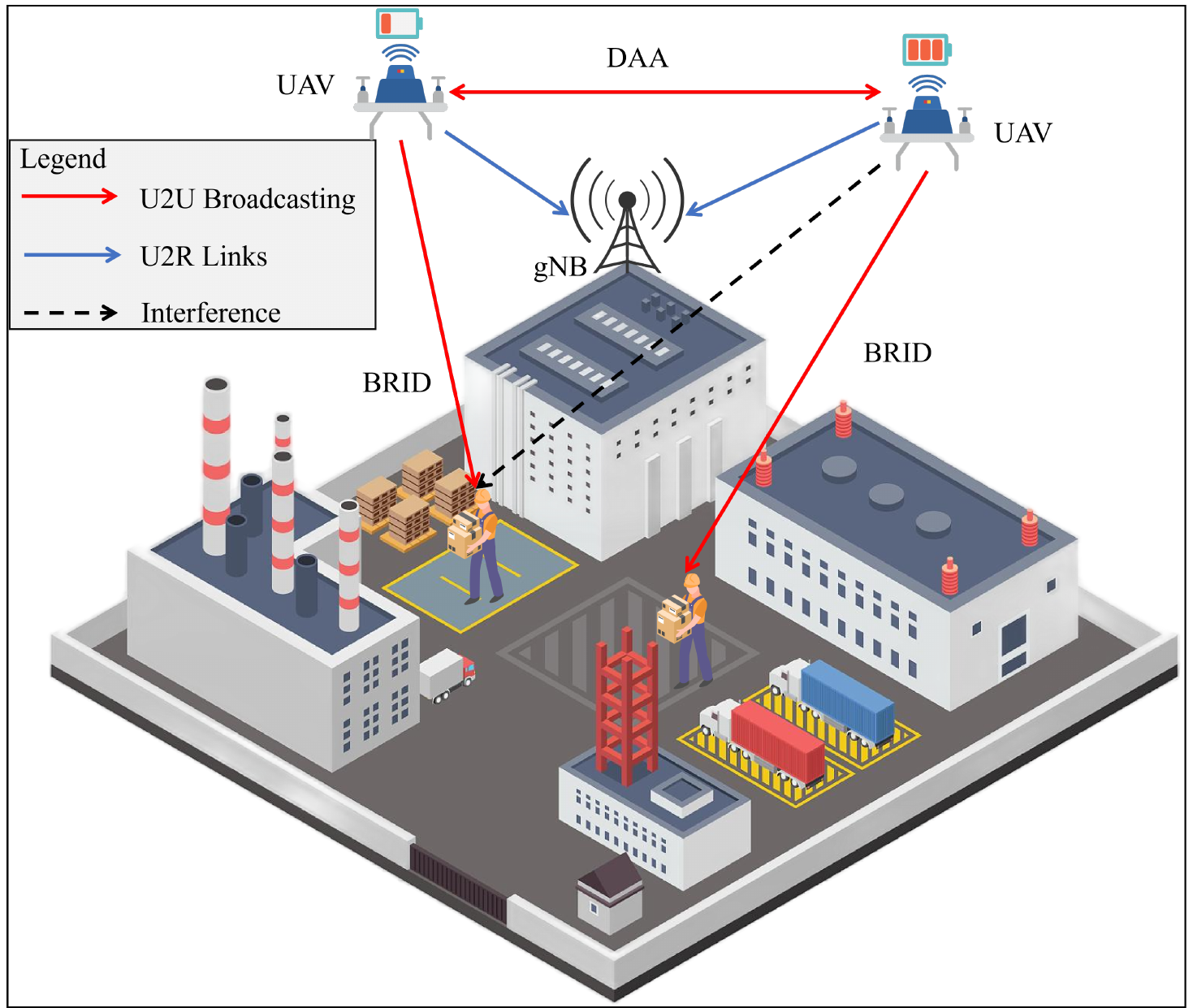}}
\caption{
Multi-UAV mobility coverage system.
}
\label{Fig:System_model1}
\vspace{-10pt}
\end{figure}
\section{Study case 1: Dynamic service provisioning with safety constraints}
\label{subsec:uav_system}
As illustrated in Fig.~\ref{Fig:System_model1}, we consider a mission-critical multi-UAV network where $N_u$ UAVs function as aerial base stations.
These agents must simultaneously support reliable UAV-to-UAV (U2U) coordination and high-bandwidth UAV-to-Ground (U2R) links.

\subsection{System Model}
\label{UAV_SYS_MODEL}
Each UAV broadcasts detect-and-avoid (DAA) messages to other UAVs and ground users via URLLC links while maintaining eMBB connections to the ground base station (gNB). Resource blocks are allocated across $B$ orthogonal subchannels in 1ms mini-slots, with UAVs independently selecting subchannels and power levels based on channel states.
To preclude resource contention between critical control signaling and the data plane under optimization, the kinematic states of the agents (via DAA messages) are disseminated over a strictly orthogonal, pre-allocated URLLC control channel. This out-of-band signaling architecture eliminates circular dependencies, guaranteeing the deterministic, ultra-low latency spatial awareness prerequisite for strict collision avoidance.

The channel between UAV $n_u$ and receiver $n_r$ (UAV or ground user) over subchannel $b$ follows Rician fading:
\begin{align}
h_{n_u,n_r}[b] = \nu_{n_u,n_r}f_{n_u,n_r}[b],
\end{align}
where $\nu_{n_u,n_r}$ captures path loss and shadowing and $f_{n_u,n_r}[b]$ represents small-scale fading.

\subsection{UAV Movement Model}
As shown in Fig.~\ref{fig:distance}, each UAV possesses the capability to dynamically adjust its trajectory to optimize service coverage performance while ensuring collision avoidance with other UAVs. As illustrated in Fig.~\ref{fig:movement}, the movement model employs a discrete action space comprising five mobility primitives: four directional movements (forward, backward, left, right) with a fixed step size of one meter, and one hovering action (remaining stationary). 
The position update is:
\begin{align}
\mathbf{p}_{n_u}(t+1) = \mathbf{p}_{n_u}(t) + \Delta\mathbf{p}_{n_u}(a_{n_u}(t))
\end{align}
where $\Delta\mathbf{p}_{n_u} \in \{(1,0,0), (-1,0,0), (0,1,0), (0,-1,0), (0,0,0)\}$. Collision avoidance is enforced by $\|\mathbf{p}_{n_u}(t) - \mathbf{p}_{n_u'}(t)\|_2 \geq d_{\min}$, where ${n_u'}\neq{n_u}$ and ${n_u'} \in [1, N_u]$.

\begin{figure*}[!tbp]
\centerline{\includegraphics[width=.975\textwidth]{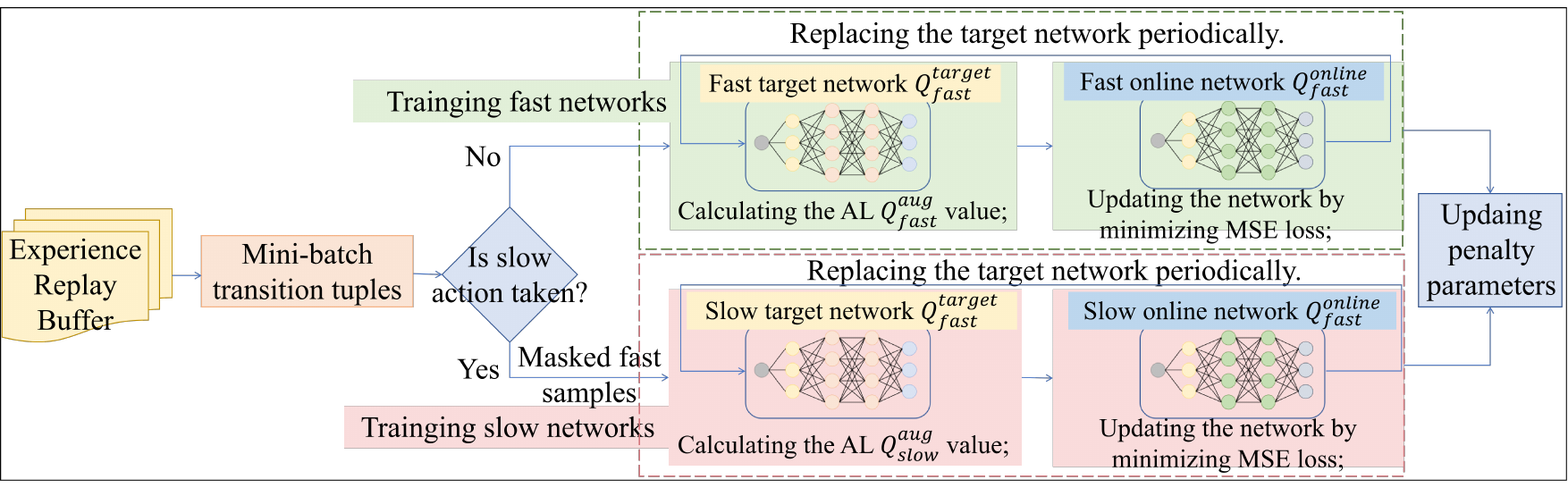}}
\caption{
The architecture of the proposed dual-timescale augmented Lagrangian Q-network. 
}
\label{Fig:Architecture}
\vspace{-10pt}
\end{figure*}

\begin{figure}[!tbp]
\centering
\subfigure[Inter-UAV safety distance constraint] {
\includegraphics[width=.225\textwidth]{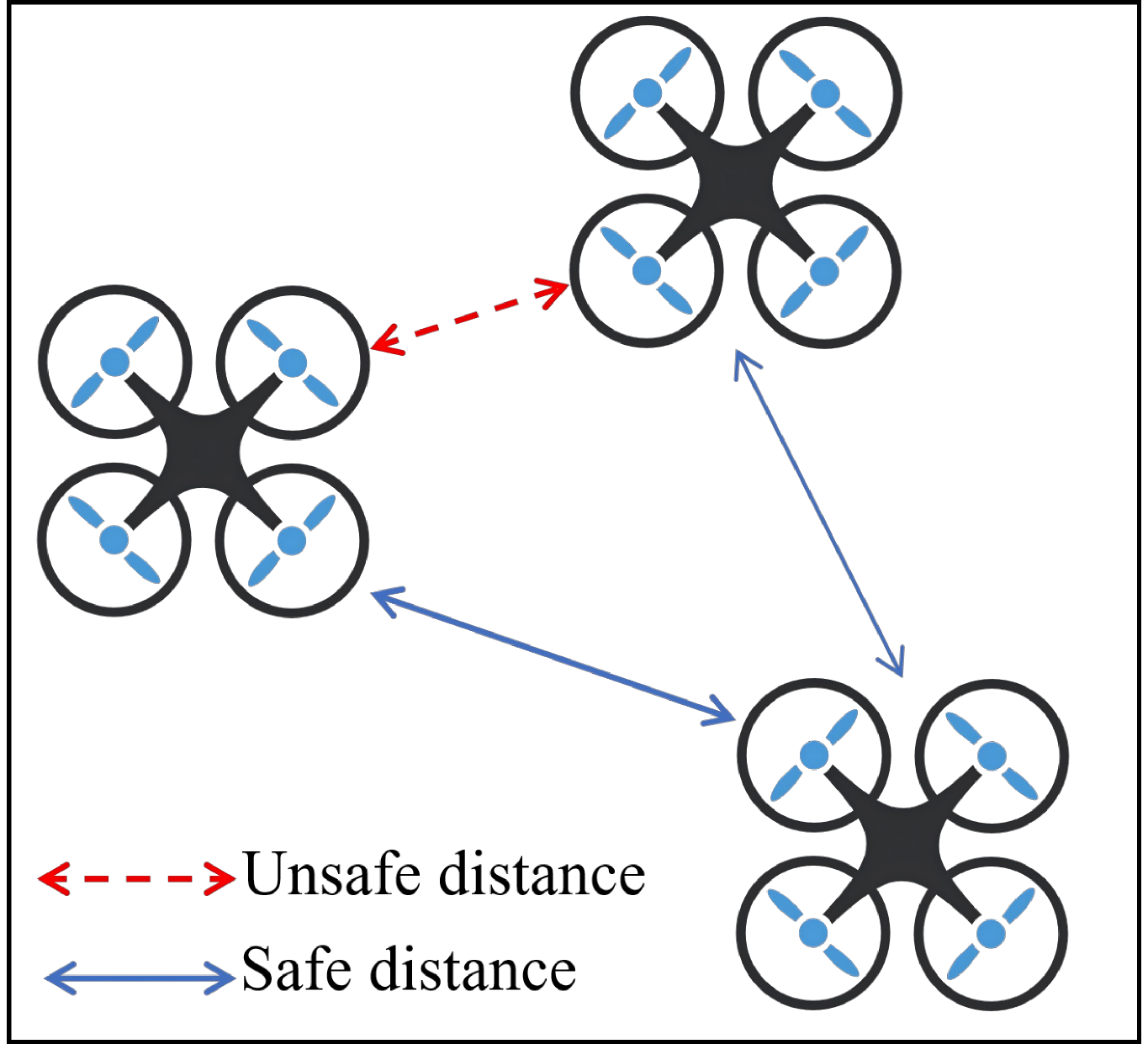}
\label{fig:distance}
}
\subfigure[UAV mobility action space] {
\includegraphics[width=.225\textwidth]{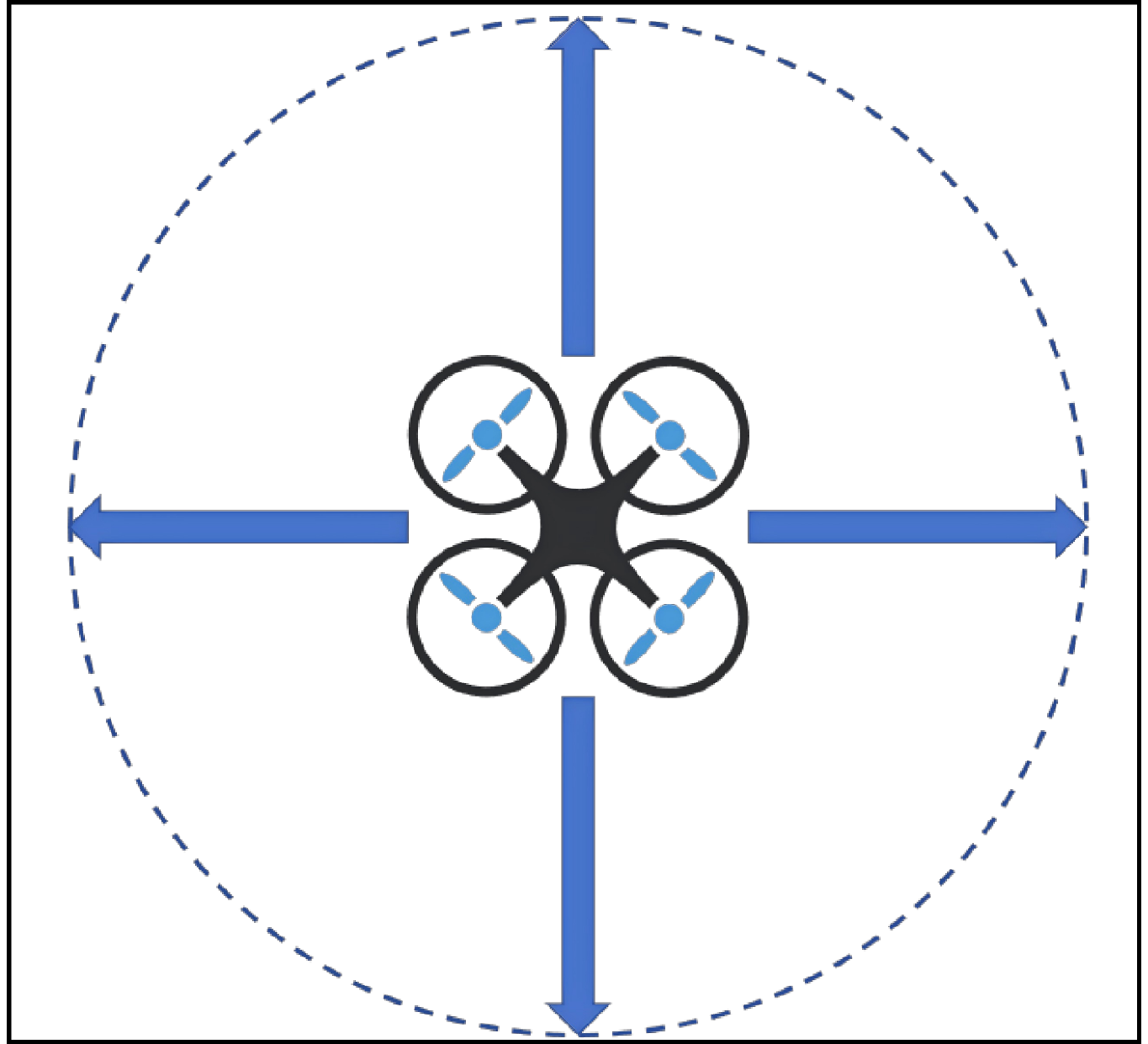}
\label{fig:movement}
}
\caption{
UAV operational constraints and mobility model: (a) Minimum inter-UAV distance requirement for collision avoidance, ensuring safe separation between aerial nodes; (b) Discrete action space for UAV mobility control, comprising five movement primitives (forward, backward, left, right, hover) for trajectory optimization while maintaining operational safety.
}
\label{fig:safe distance}
\end{figure}

\subsection{Transmission and Energy Models}
The SINR of the U2U link between UAV $n_u$ and receiver $n_r$ is:
\begin{align}
\text{SINR}_{n_u,n_r}[b] = \frac{{h}_{n_u,n_r}[b]P_{n_u}^{U}[b]\omega_{n_u}^{U}[b]}{\sigma^{2} + I_{n_u,n_r}[b]},
\end{align}
where $P_{n_u}^{U}[b]$ is transmit power, $\omega_{n_u}^{U}[b] \in \{0,1\}$ indicates subchannel usage, and $I_{n_u,n_r}[b]$ aggregates interference from other transmissions. The achievable rate is $C_{n_u,n_r}[b] = W_b \log_2(1 + \text{SINR}_{n_u,n_r}[b])$.

Similarly, U2R links to the gNB follow the same model with channel $\hat{h}_{n_u,R}[b]$ and rate $C_{n_u,R}[b]$. Each subchannel serves at most one transmission type: $\omega_{n_u}^{U}[b] + \omega_{n_u}^{R}[b] \leq 1$.

UAV energy consumption includes communication energy and operational overhead $\epsilon_o$:
\begin{align}
E_{n_u}(t) = E_{n_u}(t-1) - \left(\epsilon_o + \sum_{b=1}^{B}(P_{{n_u}}^{U}[b] + P_{{n_u}}^{R}[b])\Delta t\right).
\end{align}

\subsection{Constrained Optimization Formulation}

U2U reliability measures successful DAA message delivery:
\begin{align}
R_{n_u}^{\text{U2U}}(t) = \frac{1}{N_u}\sum_{n_u=1}^{N_u} \mathbb{I}\left\{\sum_{b=1}^{B} C_{n_u,n_r}[b]\omega_{n_u}^{U}[b]\Delta t \geq D\right\},
\end{align}
where $D$ is message size and $\mathbb{I}\{\cdot\}$ is the indicator function.

Total U2R throughput aggregates across all UAVs:
\begin{align}
R^{\text{U2R}}(t) = \sum_{n_u=1}^{N_u}\sum_{b=1}^{B} C_{n_u,R}[b]\omega_{n_u}^{R}[b].
\end{align}

The constrained optimization problem for each UAV $n_u$ aligns with the general CMDP framework:
\begin{align}
&\text{(P2):} \max_{\pi_{n_u}} \mathbb{E}\left[\sum_{t=1}^{T} \gamma^t R^{\text{U2R}}(t)\right] \label{eq:uav_objective} \\
&\text{s.t.}: R_{n_u}^{\text{U2U}}(t) \geq \eta_{\min}, \quad \forall t \in \{1,\dots,T\}, \label{eq:u2u_constraint} \\
&\qquad E_{n_u}(t) \geq E_{\min}, \quad \forall t \in \{1,\dots,T\}, \label{eq:energy_constraint} \\
&\qquad \omega_{n_u}^{U}[b] + \omega_{n_u}^{R}[b] \leq 1, \quad \forall b \in \{1,\dots,B\}, \\
&\qquad \|\mathbf{p}_{n_u}(t) - \mathbf{p}_{n_u'}(t)\|_2 \geq d_{\min},{n_u'}\neq{n_u}.
\label{eq:channel_constraint}
\end{align}
where $\pi_{n_u}$ is the policy governing power allocation, subchannel selection and trajectory optimization, $\eta_{\min}$ is the minimum U2U reliability threshold, and $E_{\min}$ is the minimum required energy for safe operation.

This UAV trajectory control and resource allocation problem exemplifies the three fundamental challenges addressed by Safe-Deep Q-Learning: mixed-integer optimization of discrete variables $\omega_{n_u}^{U}[b], \omega_{n_u}^{R}[b]$,  $P_{n_u}^{U}[b]$, and $P_{n_u}^{R}[b]$), stochastic dynamics (time-varying channels), and dual-timescale constraints (instantaneous U2U reliability and energy safety bounds).

\subsection{State-action representation}
Employing a dual-timescale hybrid architecture, the action space seamlessly integrates mobility control with communication resource allocation.
Slow-timescale decisions govern trajectory planning through five discrete movement primitives: forward, backward, left, right, and hover. 
Concurrently, fast-timescale actions manage spectrum sharing and power distribution across U2U broadcast and U2R unicast links. 
Each transmission action combines $\mathcal{Z}_p$ discrete power levels with $\mathcal{Z}_c$ orthogonal subchannels, yielding $\mathcal{Z}_p \times \mathcal{Z}_c$ options per link and $(\mathcal{Z}_p \times \mathcal{Z}_c)^2$ composite communication strategies. 
The complete action tuple thus coordinates locomotion and transmission in a unified decision framework.

The observation space constitutes a high-dimensional continuous vector providing comprehensive environmental characterization. It encapsulates time-varying channel propagation characteristics, incorporating both large-scale fading statistics and small-scale fast-fading components, for U2U and U2R communication links. Spatial configuration is represented through relative Euclidean distances to all cooperative agents, establishing crucial geometric constraints for collision prevention and formation control. This integrated state representation facilitates robust perception in dynamically evolving environments while maintaining operational safety under stringent latency requirements.

The relative Euclidean distances to all cooperative agents are required to evaluate spatial constraints. 
Practically, this decentralized spatial awareness is acquired via the periodic DAA telemetry broadcasts.
As detailed in Section \ref{UAV_SYS_MODEL}, these lightweight updates are disseminated over a dedicated orthogonal URLLC control channel to preclude data-plane contention.
To constrain the signaling overhead, the distance matrices are exclusively updated at the slow timescale, aligning the control plane latency with the physical mobility limits of the agents.

\subsection{Control Plane and CTDE Architecture}
The proposed multi-agent system operates under the Centralized Training with Decentralized Execution (CTDE) paradigm. 
During offline training, a centralized entity evaluates global collision risks and updates the dual variables $\boldsymbol{\lambda}$, $\boldsymbol{\mu}$, and $\boldsymbol{\nu}$. 
During execution, the control plane is fully decentralized. Each UAV $n$ relies solely on its local observation and the lightweight DAA messages broadcast by peers to make instantaneous routing and power control decisions without requiring a central coordinator.
Applying the methodology in Section \ref{TrainingRegime}, this study eschews the ab initio learning of fundamental UAV flight dynamics.
 Instead, we assume each UAV is equipped with a foundational kinematic controller that strictly enforces the minimum safety distance $d_{min}$ and emergency battery thresholds. 
The proposed Safe-Deep Q-Learning agent operates exclusively as a higher-order network orchestrator, synergizing this deterministic navigation baseline with dynamic spectrum and power allocation. Specifically, if the agent's intended spatial trajectory violates $d_{min}$ or exceeds the energy budget, the safety shield overrides the mobility command with a forced ``hover” state and curtails transmission power.
The ensuing constraint violation is then passed back to the Lagrangian dual variables, driving the multi-agent system toward a globally safe and network-optimal equilibrium.

\begin{figure}[!tbp]
\centerline{\includegraphics[width=.475\textwidth]{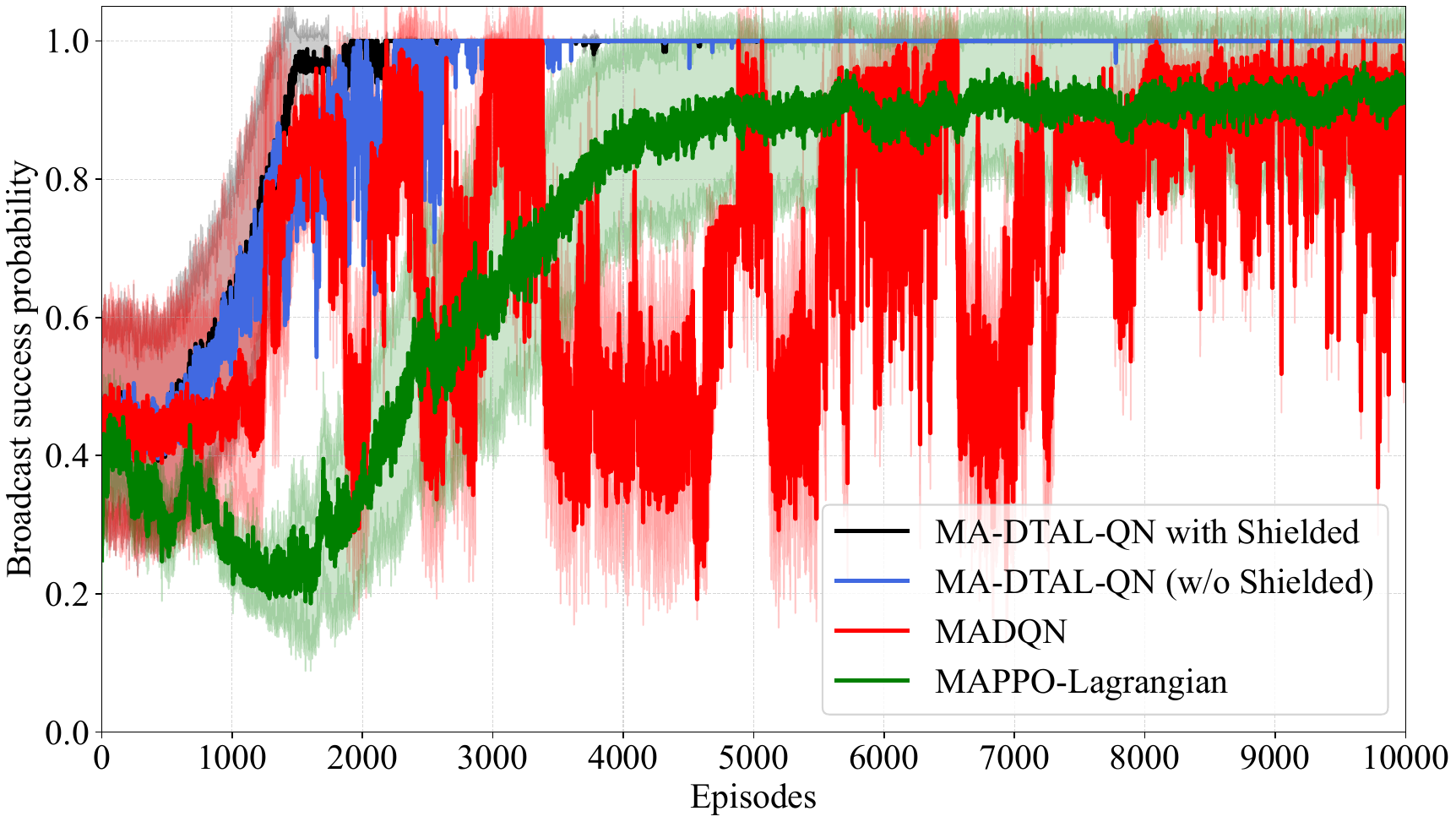}}
\caption{
Comparative Analysis of U2U Broadcast Constraint Convergence: Traditional MADQN exhibits persistent constraint violations throughout training, failing to maintain the required 100\% reliability threshold for safety-critical DAA messages. Our proposed MA-DTAL-QN achieves rapid constraint satisfaction within 4,000 episodes and maintains stable compliance with the reliability requirement, demonstrating the effectiveness of adaptive penalty scaling and constraint violation tracking mechanisms.
}
\label{fig:BroadcastProbability}
\vspace{-10pt}
\end{figure}

\begin{figure}[!tbp]
\centerline{\includegraphics[width=.475\textwidth]{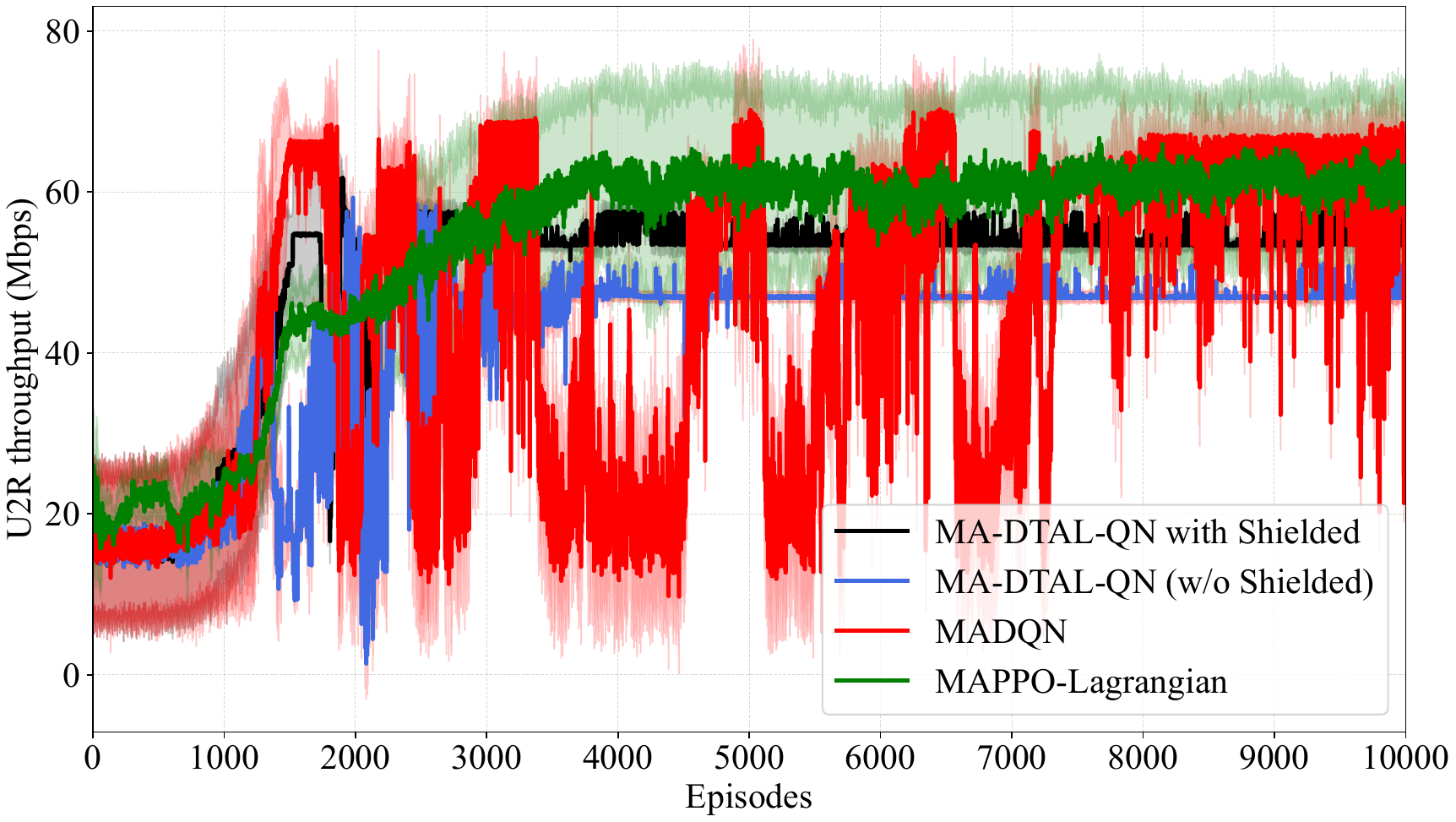}}
\caption{
U2R Throughput Optimization under Safety Constraints:  Conventional MADQN shows slow convergence with significant oscillations; Our MA-DTAL-QN achieves faster convergence with improved stability while maintaining all safety constraints.}
\label{fig:U2RThroughput}
\vspace{-10pt}
\end{figure}
\subsection{Evaluation results}
\begin{table}[htbp]
\renewcommand{\arraystretch}{1.2} %
\caption{Hyperparameters for Safe-DQL and Baselines}
\label{tab:hyperparameters}
\centering
\begin{tabular}{p{0.55\columnwidth} | p{0.3\columnwidth}}
\hline
\hline
\textbf{Parameter \& Symbol} & \textbf{Value} \\ 
\hline
\multicolumn{2}{c}{\textbf{Network Architecture}} \\ 
\hline
Hidden Layers (Safe-DQL \& MAPPO) & 2 $\times$ 128 units \\ 
Activation Function & ReLU \\
Optimizer & Adam \\ 
\hline
\multicolumn{2}{c}{\textbf{RL Training Process}} \\ 
\hline
Replay Buffer Capacity $\mathcal{D}$ & $50,000$ \\ 
Mini-Batch Size $|\mathcal{B}|$ & $1024$ \\ 
Discount Factor $\gamma$ & $0.95$ \\ 
Target Replace Freq. (Safe-DQL) & $100$ steps \\
Exploration $\epsilon$ (Safe-DQL) & Linear decay $[1, 0]$ \\
PPO Clip Ratio $\epsilon_{\text{clip}}$ & $0.2$ \\ 
PPO Update Epochs $K_{\text{epochs}}$ & $4$ \\ 
\hline
\multicolumn{2}{c}{\textbf{Learning Rates (Primal \& Dual)}} \\ 
\hline
Primal LR $\alpha$ (Safe-DQL) & $2 \times 10^{-5}$ \\
Primal LR (MAPPO Actor/Critic) & $3 \times 10^{-4}$ \\
Dual LRs $\beta_\lambda, \beta_\mu, \beta_\nu$ & $0.1$ \\ 
\hline
\multicolumn{2}{c}{\textbf{Augmented Lagrangian Parameters}} \\ 
\hline
Initial Penalty Multipliers $\boldsymbol{\rho}_0$ & $[0.05, 0.05, 0.05]$ \\ 
Penalty Scaling Factor $\xi$ ($c_{\rho}$) & $1.1$ \\ 
Max Penalty Bound $\rho_{\text{max}}$ & $100,000$ \\ 
\hline
\multicolumn{2}{c}{\textbf{Environment \& Dual-Timescale Dynamics}} \\ 
\hline
Fast-Timescale Step $\Delta t$ & $1$ ms \\ 
Slow-Timescale Interval & $20$ steps \\
Max Steps per Episode $T$ & $100$ \\
\hline
\hline
\end{tabular}
\end{table}
\subsubsection{Training performance evaluation}
To validate the effectiveness of Safe-Deep Q-Learning in safety-critical wireless applications, we conducted comprehensive experiments on the multi-UAV system described in Section IV-A. Our evaluation focuses on four key performance metrics: U2U broadcast reliability, U2R throughput, collision avoidance, and energy efficiency.
The hyperparameter configurations for the proposed Safe-Deep Q-Learning and Baseline Algorithms are summarized in Table~\ref{tab:hyperparameters}.

Figure \ref{fig:BroadcastProbability} evaluates the U2U broadcast reliability. The unconstrained MADQN predictably fails to satisfy the safety threshold. While the MAPPO-Lagrangian baseline shows improvement, it plateaus below the required 100\% reliability, reflecting the gradient variance issues inherent in on-policy penalty updates. 
Conversely, both variants of the proposed MA-DTAL-QN successfully converge to full compliance.
Notably, the shielded MA-DTAL-QN achieves strict stabilization by episode 2,000, outpacing its unshielded counterpart, demonstrating that the deterministic physical shield concurrently accelerates the learning of soft networking constraints by truncating unsafe exploration.

Figure \ref{fig:U2RThroughput} plots the corresponding U2R throughput. The MADQN yields nominal peaks near 70 Mbps, but these values are physically invalid due to continuous constraint violations. Similarly, the 60 Mbps capacity achieved by MAPPO-Lagrangian remains an artifact of its incomplete safety compliance. The proposed shielded MA-DTAL-QN converges to a highly stable plateau of 55 Mbps. Interestingly, this outperforms the unshielded variant, which stabilizes near 48 Mbps. This phenomenon validates that the execution-level shield not only guarantees absolute physical safety but also actively guides the policy search away from detrimental states, ultimately yielding a superior and strictly feasible communication performance.

\begin{figure}[!tbp]
\centerline{\includegraphics[width=.475\textwidth]{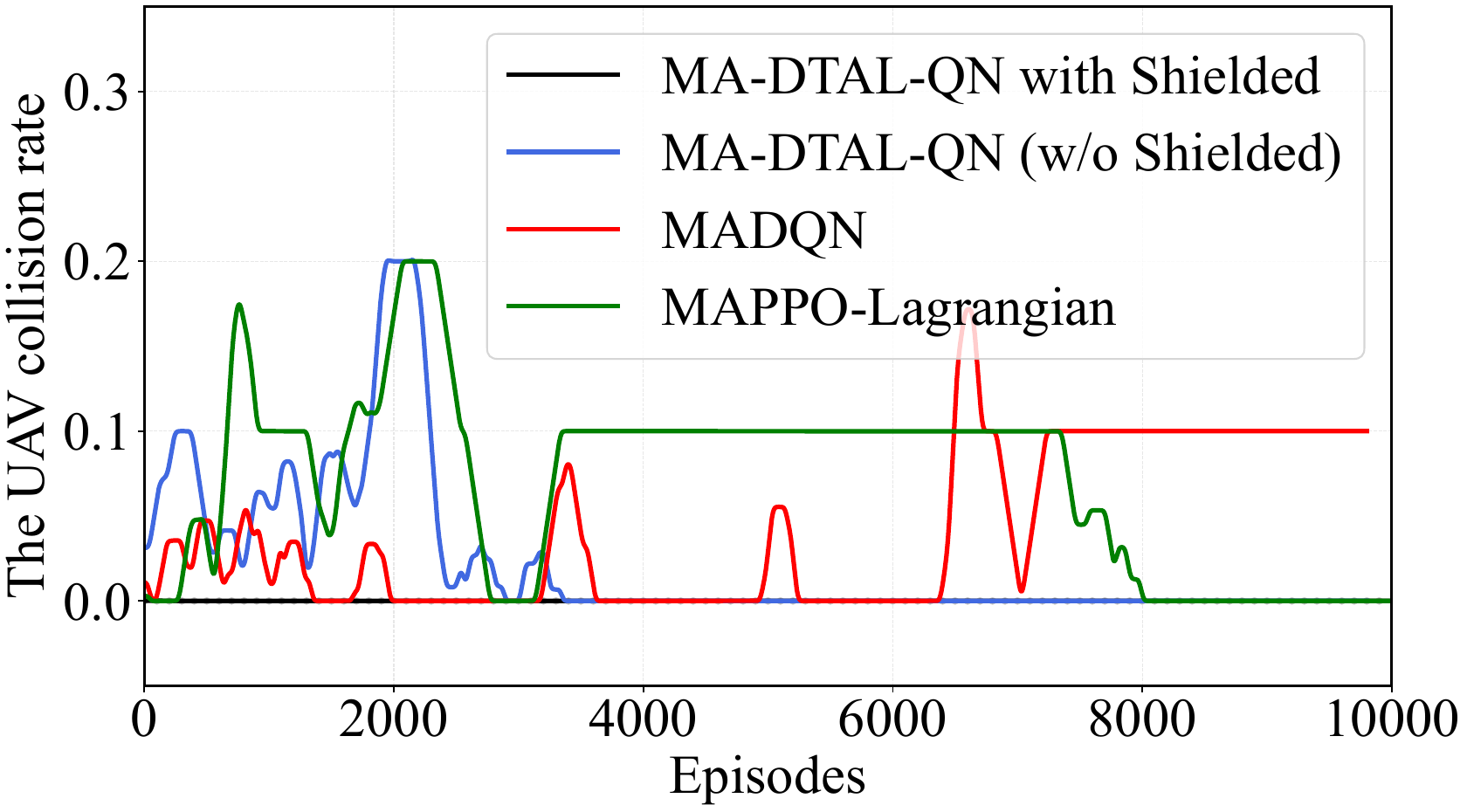}}
\caption{
Collision Avoidance Performance: Safe-Deep Q-Learning achieves zero collision violations after convergence, successfully maintaining the 30-meter safety distance between all UAV pairs through barrier functions and safe action projection.}
\label{fig:collision_avoidance}
\vspace{-10pt}
\end{figure}

\begin{figure}[!tbp]
\centerline{\includegraphics[width=.475\textwidth]{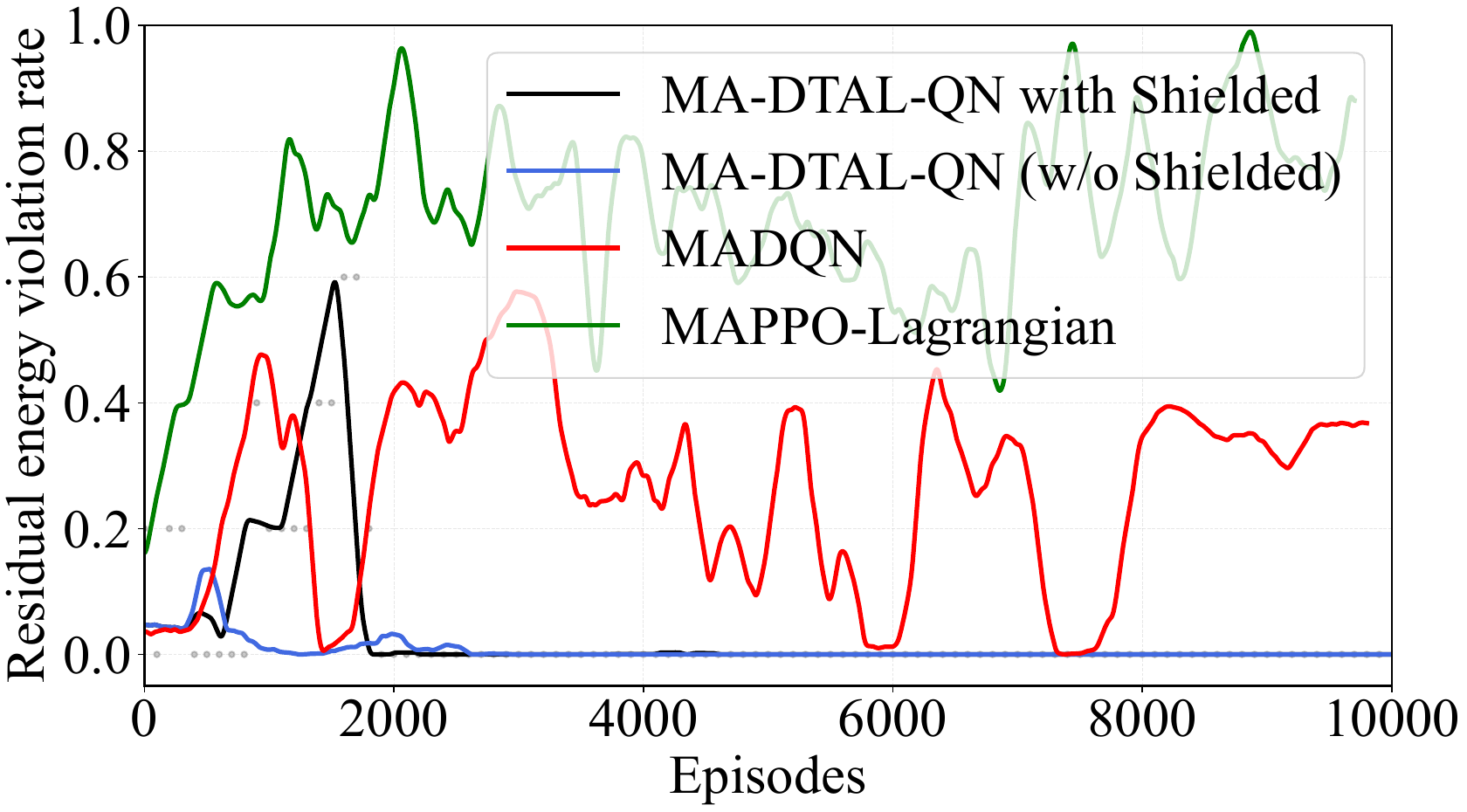}}
\caption{
Energy Constraint Satisfaction: Safe-Deep Q-Learning effectively manages cumulative energy consumption, maintaining sufficient residual energy while balancing communication requirements with operational endurance.}
\label{fig:residual_energy}
\vspace{-10pt}
\end{figure}

Figure \ref{fig:collision_avoidance} delineates the spatial collision rates, explicitly highlighting the efficacy of the proposed hierarchical architecture. By integrating the deterministic Safety Shield, our framework mathematically precludes any inter-UAV distance violations, maintaining an absolute zero-collision rate throughout the entire exploration phase. Conversely, the unshielded ablation eventually learns compliance but suffers inevitable exploratory crashes, whereas the MADQN and MAPPO-Lagrangian baselines chronically fail to enforce safe separation. Furthermore, Figure \ref{fig:residual_energy} evaluates the cumulative residual energy violations. While the baseline methods exhibit erratic and non-convergent energy depletion, both variants of the proposed MA-DTAL-QN rapidly interiorize the energy budget constraints via off-policy augmented Lagrangian updates, achieving strict zero-violation convergence within 2,000 episodes.

\begin{table*}[!tbp]
\caption{Training Performance Comparison: Safe-Deep Q-Learning vs. Baseline Methods}
\label{tab:performance_comparison}
\centering
\begin{tabular}{lcccc}
\hline
\textbf{Metric} & \textbf{MAPPO-Lagrangian} & \textbf{MADQN} & \textbf{Proposed DTAL-QN} & \textbf{Improvement\textsuperscript{*}} \\
\hline
Final U2U Reliability & 91\% &92.3\% & \textbf{100\%} & +8.34\% \\
Final U2R Throughput & 53.74 Mbps &\textbf{69.1} Mbps & 51.2 Mbps & -25.9\% \\
Final Collision Violation Rate & 3\%& 4.7\% & \textbf{0\%} & +100\% \\
Final Residual Energy Satisfaction & 100\% & 88.1\% & \textbf{100\%} & 0\% \\
Convergence Episodes & 20,000 & 8,000 & \textbf{2,300} & +71.25\% \\
\hline
\end{tabular}
\vspace{4pt}
\raggedright
\footnotesize{
\textsuperscript{*}Relative improvement of the proposed method over the best-performing baseline.
}
\end{table*}

Table \ref{tab:performance_comparison} provides a comprehensive evaluation of the proposed DTAL-QN against both unconstrained (MADQN) and constrained (MAPPO-Lagrangian) baselines, illuminating the fundamental trade-off between unconstrained capacity and absolute operational safety.
The results demonstrate significant improvements across all key metrics, particularly in constraint satisfaction where our method achieves near-perfect compliance with all safety requirements while simultaneously enhancing system performance.

\subsubsection{Testing Performance Evaluation}
The testing results in Table \ref{tab:testing_performance} rigorously quantify the fundamental trade-offs between communication capacity and absolute operational safety. While the unconstrained MADQN achieves a nominal throughput of 65 Mbps, this metric is artificially inflated by a severe 19.99\%
 spatial violation rate and unacceptable communication reliability. To provide a rigorous constrained benchmark, we evaluate MAPPO-Lagrangian. 
 Although it significantly improves constraint adherence (reducing violations to 10\% and elevating BRID success to 92\%), it fundamentally fails to guarantee strict safety due to the high gradient variance induced by on-policy penalty updates. 
 In stark contrast, the proposed DTAL-QN (Safe-Deep Q-Learning) attains strictly zero distance violations and 100\% broadcast reliability. To benchmark the performance cost of enforcing this absolute safety, we derived a Genie-aided Upper Bound (76.18 Mbps), which assumes perfect non-causal channel knowledge and zero interference. Operating under strictly causal observations and dense co-channel interference, our proposed framework captures approximately 65.6\% of this idealized theoretical limit, demonstrating a highly effective and pragmatic balance between uncompromised physical safety and networking efficiency.

 As shown in the results, while the baseline MAPPO-Lagrangian reduces constraint violations over time, it exhibits severe sample inefficiency and suboptimal throughput compared to our proposed Safe-Deep Q-Learning. This performance degradation stems from the fundamental nature of on-policy policy gradient methods. In MAPPO, the dynamic scaling of the Augmented Lagrangian penalties drastically alters the variance of the Monte Carlo returns. Consequently, the advantage function is frequently dominated by massive penalty spikes from sporadic safety violations, causing the Actor network's gradient updates to collapse into sub-optimal local minima (e.g., sacrificing all U2R throughput to merely maintain the broadcast constraint).
In stark contrast, our proposed value-based Safe-Deep Q-Learning absorbs these penalties gracefully into the Q-values via off-policy experience replay, naturally decoupling the exploration noise from the harsh Lagrangian constraints and achieving superior convergence stability in the hybrid discrete-continuous action space.

\begin{table*}[!tbp]
\caption{Testing Performance Evaluation over 100 Episodes}
\label{tab:testing_performance}
\centering
\begin{tabular}{lccccc}
\hline
\textbf{Metric} & \textbf{MAPPO-Lagrangian} & \textbf{MADQN} & \textbf{Proposed DTAL-QN} & \textbf{Upper Bound} & \textbf{Improvement\textsuperscript{*}} \\
\hline
Distance Violation Rate & 10\% & 19.99\% &  \textbf{0.00\%} & 0.00\% &\textbf{+100\%} \\
BRID Success Probability & 92\%& 61\% &  \textbf{100\%} & 100\% &\textbf{+8.7\%} \\
DAA Success Probability & 90\% & 54\% & \textbf{100\%} & 100\% & \textbf{+11.1\%} \\
Average U2R Throughput & 65.37 Mbps & 65 Mbps & 50 Mbps & 76.18 Mbps &-23.5\% \\
\hline
\end{tabular}
\vspace{-5pt}
\raggedright
\footnotesize{
\textsuperscript{*}Relative improvement of the proposed method over the best-performing baseline.
}
\end{table*}

\begin{figure}[!tbp]
\centerline{\includegraphics[width=.475\textwidth]{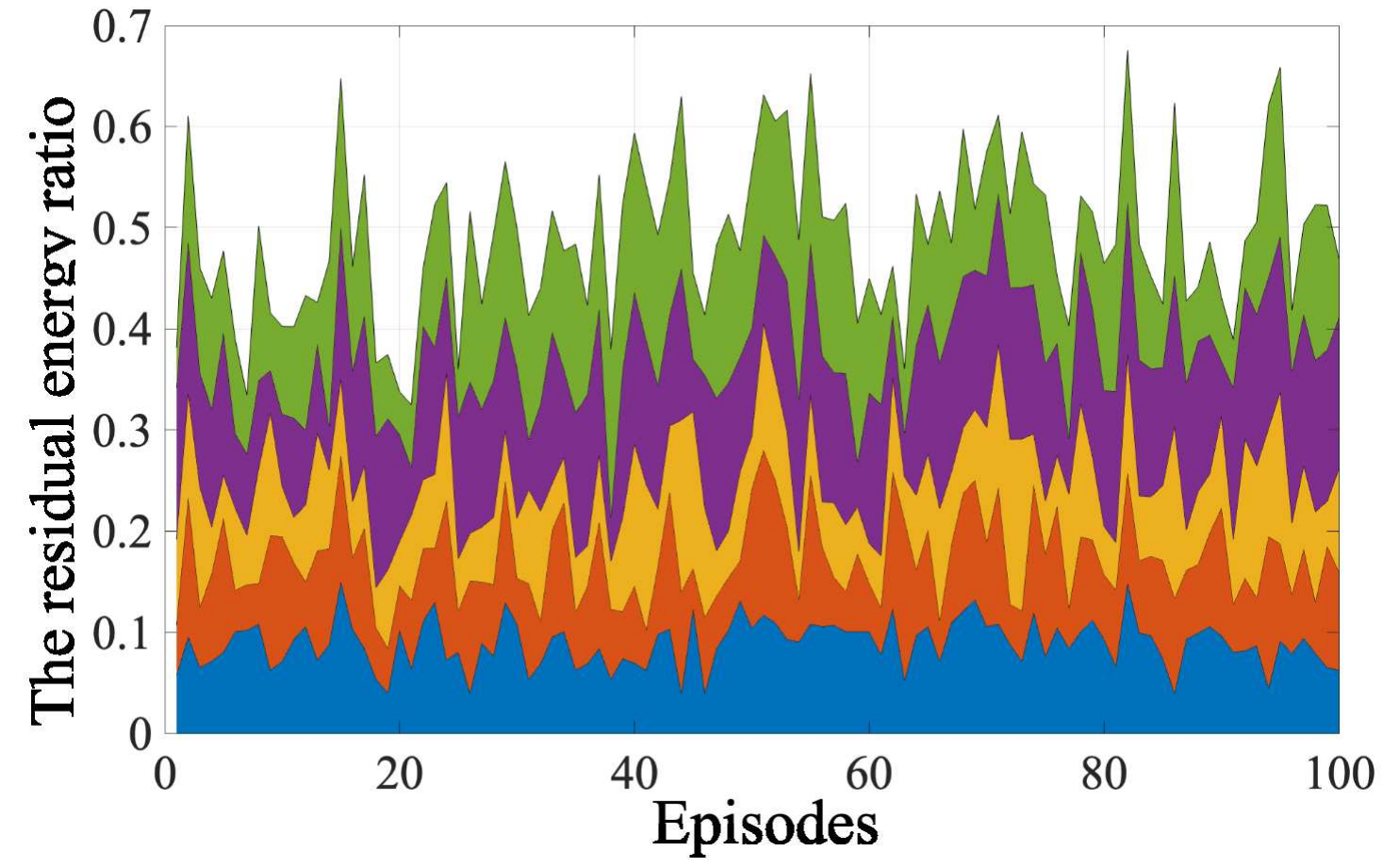}}
\caption{
Residual Energy Safety during Testing: The normalized residual energy of 5 UAVs across 100 testing episodes demonstrates critical energy preservation, with all UAVs maintaining sufficient energy reserves to prevent mission failure due to battery exhaustion. The controlled energy consumption profile ensures operational safety throughout mission duration.
}
\label{fig:residual_energy_testing}
\vspace{-10pt}
\end{figure}

Energy safety performance is critically evaluated to prevent UAV crashes due to battery depletion. Figure \ref{fig:residual_energy} shows effective energy constraint enforcement during training, ensuring UAVs maintain minimum energy levels required for safe operation. Figure \ref{fig:residual_energy_testing} presents testing-phase results, where all 5 UAVs successfully preserve adequate residual energy throughout 100 episodes. The energy profiles demonstrate that while energy is progressively consumed during missions, sufficient reserves are maintained to prevent catastrophic failures, validating the algorithm's capability to enforce this life-critical constraint.

\subsubsection{Scalability and Zero-Shot Generalization}
A critical concern in multi-agent systems is the combinatorial explosion of the joint action space, $\mathcal{O}(|A^{N_u}|)$. The proposed CTDE architecture inherently resolves this bottleneck. 
During deployment, the execution is fully decentralized; thus, the action space per agent remains strictly constant (i.e., $|A_n|=255$), ensuring computational scalability regardless of the swarm size.
To empirically validate this architectural scalability, we conducted a zero-shot generalization experiment. We scaled the swarm size from $N_u=6$ to $N_u=10$ while strictly reusing the neural network weights pre-trained solely on the $5$-UAV environment. 
To prevent the observation space from expanding, we implemented a localized K-Nearest Neighbors Field of View, wherein each agent perceives only its closest neighbors. The zero-shot performance, averaged over 50 independent episodes, is summarized in Table \ref{tab:scalability}.

The results in Table \ref{tab:scalability} provide two critical insights regarding the framework's scalability. First, the localized spatial observation design demonstrates highly robust zero-shot generalization. Across unseen densities from $N_u=6$ to $9$, the policy maintains a strictly zero raw violation probability, indicating that the agents have thoroughly interiorized the spatial safety boundaries during the base training phase. Even at the maximum tested density of $N_u=10$, the intended violation probability only marginally increases to $0.001$. Crucially, these isolated deviations are deterministically intercepted by the execution-layer Safety Shield, thereby guaranteeing absolute physical collision avoidance regardless of the swarm density.
Second, the progressive degradation in U2R throughput (from $39.78$ Mbps to $3.49$ Mbps) and broadcast reliability is a direct consequence of electromagnetic physical limits rather than an algorithmic scaling failure. Because the number of active transmitters strictly exceeds the available orthogonal spectrum resources ($5$ sub-bands), the system encounters severe unmodeled co-channel interference. Consequently, the decentralized agents adopt conservative spatial maneuvers to prioritize physical safety over instantaneous capacity. This confirms that the proposed framework remains computationally and spatially scalable, with networking performance appropriately bottlenecked solely by the finite physical spectrum availability.
\begin{table*}[htbp]
\renewcommand{\arraystretch}{1.2}
\caption{Zero-Shot Generalization Performance under Scaled Swarm Densities (Pre-trained on $N_u=5$)}
\label{tab:scalability}
\centering
\begin{tabular}{c c c c}
\hline
\hline
\textbf{Swarm Size ($N_u$
)} & \textbf{U2R Rate (Mbps)} & \textbf{Broadcast Prob.} & \textbf{Raw Violation Prob.} \\
\hline
6 & 39.78 & 0.84 & 0.00
 \\
7 & 26.21 & 0.74 & 0.00
 \\
8 & 18.74 & 0.68 & 0.00
 \\
9 & 9.36 & 0.56 & 0.00
 \\
10 & 3.49 & 0.52 & 0.001
 \\
\hline
\hline
\end{tabular}
\end{table*}

\section{Study case 2: Post-Disaster Emergency Communications Under Outage Constraints}

\label{sec:studycase2}
\begin{figure}[!tbp]
\centerline{\includegraphics[width=.475\textwidth]{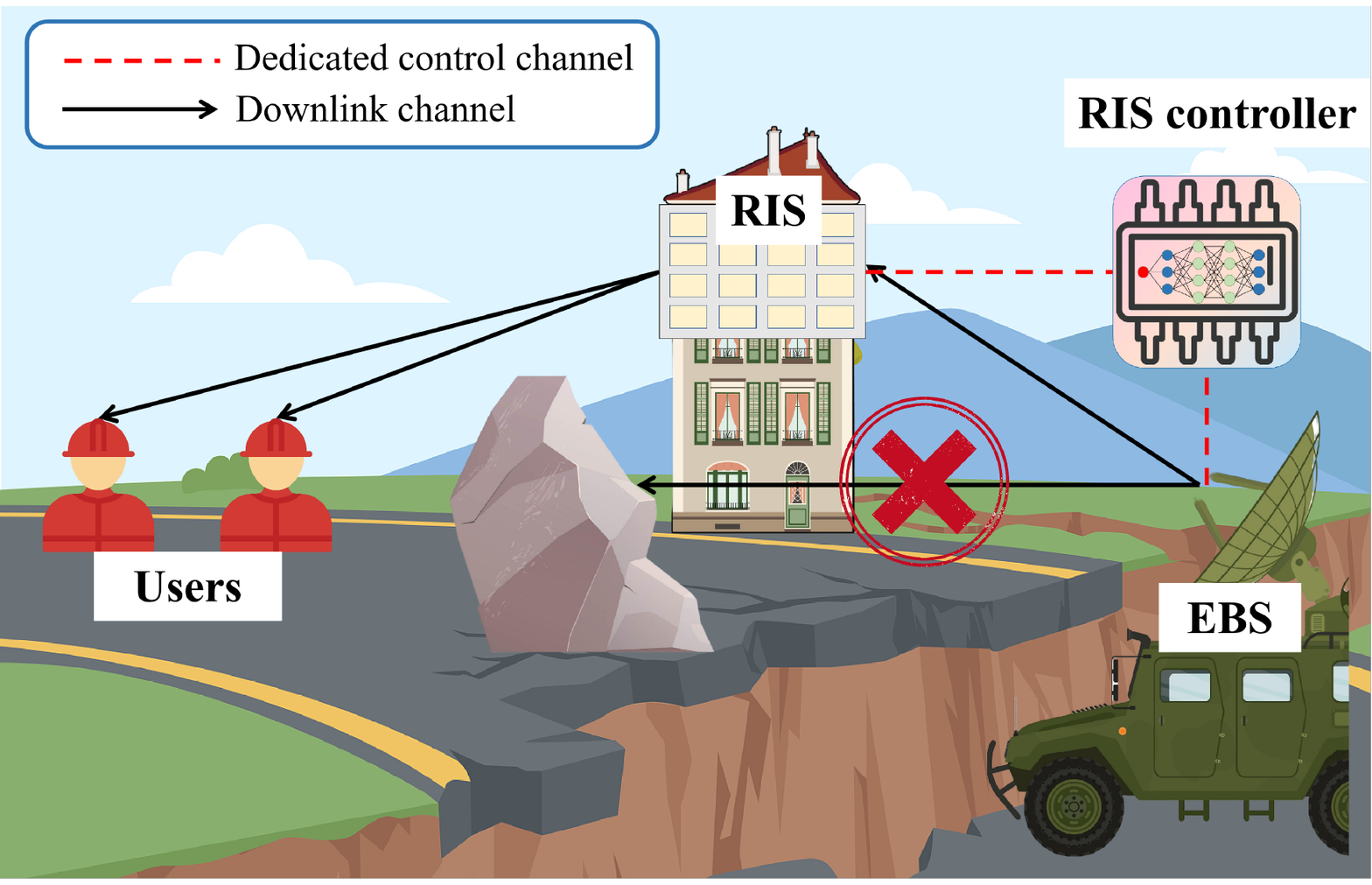}}
\caption{
RIS-assisted post-disaster emergency communication system, where an Emergency Base Station (EBS) utilizes a deployed RIS to bypass physical obstructions (e.g., collapsed buildings) and establish reliable links with rescue teams.
}
\label{Fig:RIS_system}
\vspace{-10pt}
\end{figure}

In the aftermath of catastrophic events (e.g., earthquakes or hurricanes), terrestrial communication infrastructure is often severely compromised, creating ``blind spots” that hinder rescue operations.
Re-establishing connectivity in such scenarios is a paramount safety-critical task governed by three fundamental requirements: 
1) \textbf{Ubiquitous Coverage:} ensuring signal reachability to rescue teams and trapped victims in shadowed areas; 
2) \textbf{Service Reliability:} minimizing outage probabilities to maintain continuous command-and-control links; and 
3) \textbf{Hardware Resilience:} ensuring deployed equipment operates strictly within physical feasibility margins to prevent device failure under extreme conditions.

To address the coverage challenge, RIS have emerged as a vital technology for emergency network recovery.
By reshaping the electromagnetic environment, RIS can create virtual LoS links to bypass physical obstructions such as collapsed buildings or debris. 
However, integrating RIS introduces complex hardware-intrinsic constraints. 
Specifically, owing to the passive nature of reflecting elements (which lack active RF chains), the phase shift configuration must strictly satisfy the \textit{unit-modulus constraint}. 
Violating this physical boundary renders the configuration invalid, leading to immediate communication failure. 
Consequently, this case study applies our Safe-Deep Q-Learning framework to optimize robust beamforming in an RIS-aided emergency network, ensuring maximum coverage and reliability while strictly enforcing the hardware's physical operational safety.

\subsection{System Model}
We consider a downlink emergency communication scenario where a deployable Emergency Base Station (EBS) equipped with $M_t$ antennas serves $M_u$ single-antenna rescue teams. 
Due to severe environmental obstruction (e.g., urban rubble), the direct links between the EBS and rescue teams are assumed to be blocked. To restore connectivity, an RIS with $M_r$ passive reflecting elements is deployed to establish a cascaded communication channel.

The channel from the EBS to the RIS is denoted by $\mathbf{G} \in \mathbb{C}^{M_t \times M_r}$, and the channel from the RIS to user $m_u$ is denoted by $\mathbf{h}_{r,m_u}^H \in \mathbb{C}^{1 \times M_r}$.
The reflection coefficient matrix of the RIS is modeled as a diagonal matrix $\mathbf{\Theta} = \text{diag}(e^{j\vartheta_1}, \ldots, e^{j\vartheta_{M_r}})$, where $\vartheta_{m_r} \in [0, 2\pi)$ represents the phase shift of the $m_r$-th element.
Crucially, to ensure hardware resilience and physical feasibility, each element must satisfy the unit-modulus constraint $|e^{j\vartheta_{m_r}}| = 1$.

The EBS employs linear precoding with beamforming vectors $\mathbf{w}_{m_u} \in \mathbb{C}^{M_t \times 1}$ for each rescue team $m_u$. The transmitted signal is $\mathbf{x} = \sum_{m_u=1}^{M_u} \mathbf{w}_{m_u} s_{m_u}$, where $s_{m_u}$ is the critical data symbol. The received SINR at rescue team $k$ is given by:
\begin{equation}
\text{SINR}_k = \frac{ \left| \left( \mathbf{h}_{r,m_u}^H \mathbf{\Theta} \mathbf{G} \right) \mathbf{w}_{m_u} \right|^2 }{ \sum_{m_u' \neq m_u} \left| \left( \mathbf{h}_{r,m_u}^H \mathbf{\Theta} \mathbf{G} \right) \mathbf{w}_{m_u'} \right|^2 + \sigma_{m_u}^2 },
\end{equation}
where $\sigma_{m_u}^2$ denotes the noise power.

\subsection{Problem Formulation}
The objective is to minimize the total power consumption of the battery-constrained EBS while ensuring strict QoS reliability for rescue operations and adhering to the physical hardware constraints of the RIS.
The problem is formulated as:
\begin{equation}
\begin{aligned}
& \underset{\mathbf{W}, \boldsymbol{\vartheta}}{\min} \quad \sum_{m_u=1}^{M_u} \| \mathbf{w}_{m_u} \|^2 \\
& \text{s.t.} \quad \text{SINR}_{m_u} \geq \Gamma_{m_u}, \quad {m_u} = 1, \ldots, {M_u}, \\
& \quad \quad |e^{j\vartheta_{m_r}}| - 1 = 0, \quad {m_r} = 1, \ldots, {M_r},
\end{aligned}
\label{eq:ris_problem_formulation}
\end{equation}
where $\Gamma_{m_u}$ represents the minimum SINR threshold required for reliable emergency communication. The equality constraint $|e^{j\vartheta_{m_r}}| - 1 = 0$ represents the instantaneous hard constraint enforced by the RIS hardware physics, which creates a non-convex optimization landscape suitable for our proposed Safe-Deep Q-Learning approach.

\begin{table*}[!tbp]
\caption{Testing Performance Comparison for RIS-assisted Beamforming System (100 Episodes)}
\label{tab:ris_testing_comparison}
\centering 
\begin{tabular}{lccc}
\hline
\textbf{Metric} & \textbf{Standard DQN} & \textbf{Safe-Deep Q-Learning (ALDQN)} & \textbf{Improvement} \\
\hline
Energy Cost (Watts) & \textbf{6.7513} & 9.9626 & \textbf{-47.5\%} \\
Feasible Probability & 81\% & \textbf{100\%} & \textbf{+23.5\%} \\
\hline
\end{tabular}
\vspace{-5pt}
\end{table*}

\subsection{State-action representation}
The action space for the RIS-assisted system is designed as a discrete combinatorial space that jointly optimizes transmit power allocation and RIS phase configuration. Each action encodes:
\begin{itemize}
    \item \textbf{Power Allocation}: Each user's transmit power is selected from $\mathcal{Z}_p$ discrete levels in dBm scale.
    \item \textbf{RIS Phase Configuration}: The RIS is partitioned into $\mathcal{Z}_{\text{blocks}} = \mathcal{Z}_{\text{blocks}}^x \times \mathcal{Z}_{\text{blocks}}^z$ sub-blocks, with each block sharing a common phase shift selected from an 8-point DFT codebook $\{\vartheta_1, \vartheta_2, \dots, \vartheta_8\}$ covering $[0, 2\pi)$.
\end{itemize}
The complete action is encoded as a single integer $a \in [0, (\mathcal{Z}_p)^{M_u} \times 8^{\mathcal{Z}_{\text{blocks}}}-1]$. 
Action decoding involves modular arithmetic to extract power indices and phase shift indices for subsequent system configuration.

The observation space captures the complete electromagnetic environment through comprehensive channel state information, comprising the EBS-RIS channel matrix, and the RIS-user channels. 
These matrices are flattened and concatenated into a continuous vector representation by separating the real and imaginary components. This rich perceptual representation enables the learning agent to discern spatial channel characteristics and fading patterns essential for orchestrating joint beamforming and reflection optimization in dynamic propagation environments.

\subsection{Evaluation results}
\subsubsection{Experimental Setup and System Configuration}
The RIS-assisted beamforming system comprises an access point with $M_t = 8$ antennas, an intelligent reflecting surface with $M_r = 64$ configurable passive elements, and $M_u = 4$ single-antenna users. The direct EBS-user links are assumed to be blocked, necessitating the use of RIS for signal reflection and beamforming. The system operates in the 28 GHz millimeter-wave band with 100 MHz bandwidth, representative of next-generation wireless deployments.

The channel model incorporates both large-scale path loss and small-scale Rician fading with $K$-factor of 10 dB. The EBS-RIS channel $\mathbf{G}$ follows a geometric channel model with 5 scattering clusters, while the RIS-user channels $\mathbf{h}_{r,m_u}$ experience independent Rician fading. User SINR requirements are set to $\Gamma_{m_u} = 10$ dB for all users, representing typical quality-of-service thresholds for high-data-rate applications.

\begin{figure}[!tbp]
\centerline{\includegraphics[width=.475\textwidth]{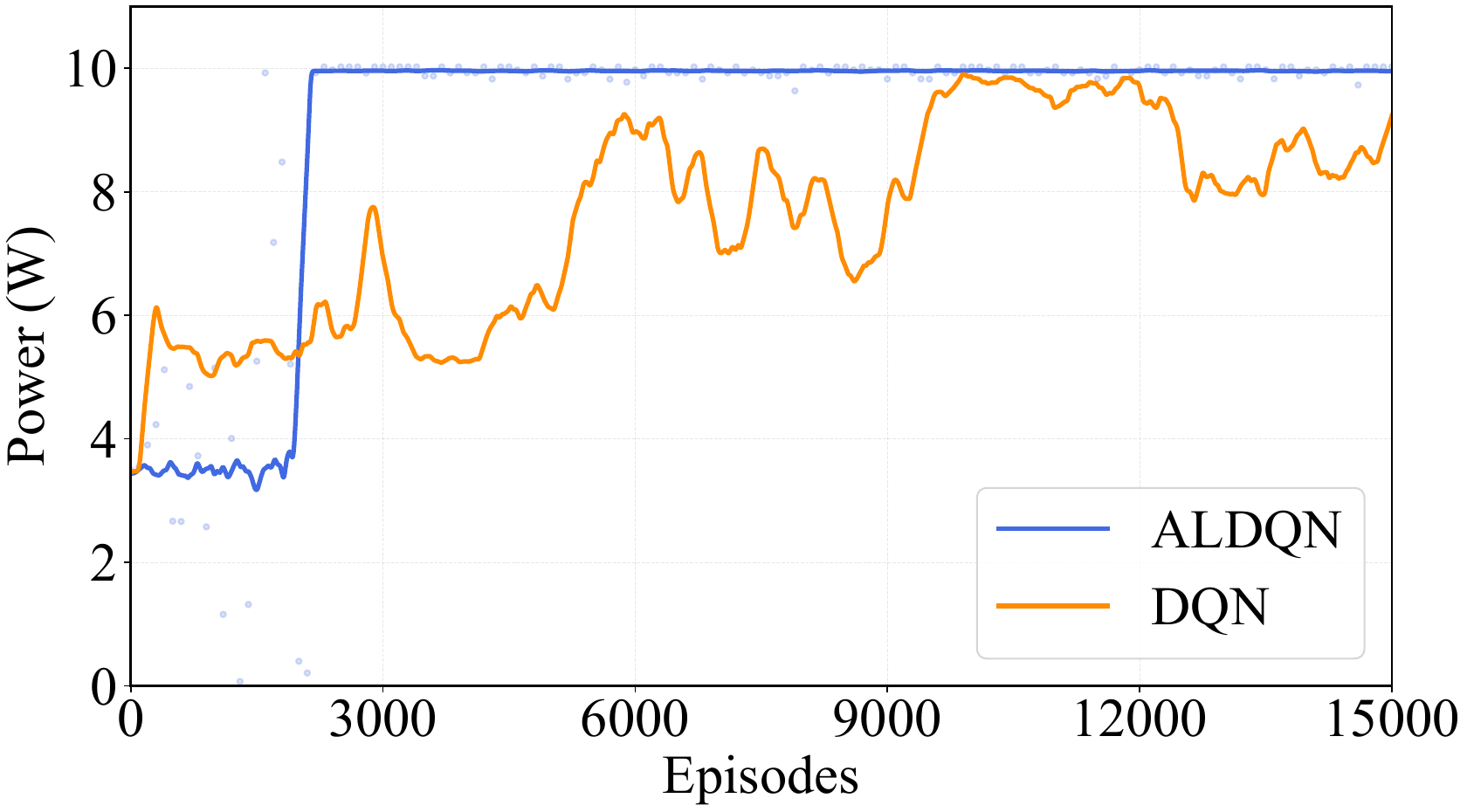}}
\caption{
Training Phase Power Consumption under SINR Constraints: Safe-Deep Q-Learning demonstrates adaptive power management during training, with increasing power consumption as the algorithm learns to satisfy stringent SINR QoS requirements. The progressive power escalation reflects the necessary trade-off to maintain $\Gamma_{m_u} = 10$ dB SINR thresholds for all users, validating the algorithm's capability to enforce QoS constraints through appropriate resource allocation.
}
\label{fig:energy_efficiency}
\vspace{-10pt}
\end{figure}

\subsubsection{Training Phase Constraint Satisfaction Analysis}
Figure \ref{fig:energy_efficiency}  illustrates the power consumption during training. The Safe-Deep Q-Learning demonstrates a more conservative and stable power adjustment strategy compared to the erratic fluctuations of the baseline DQN. 
The proposed method gradually adjusts power allocation to satisfy the stringent user-specific SINR constraints ($\Gamma_{m_u} = 10$ dB).

\begin{figure}[!tbp]
\centerline{\includegraphics[width=.475\textwidth]{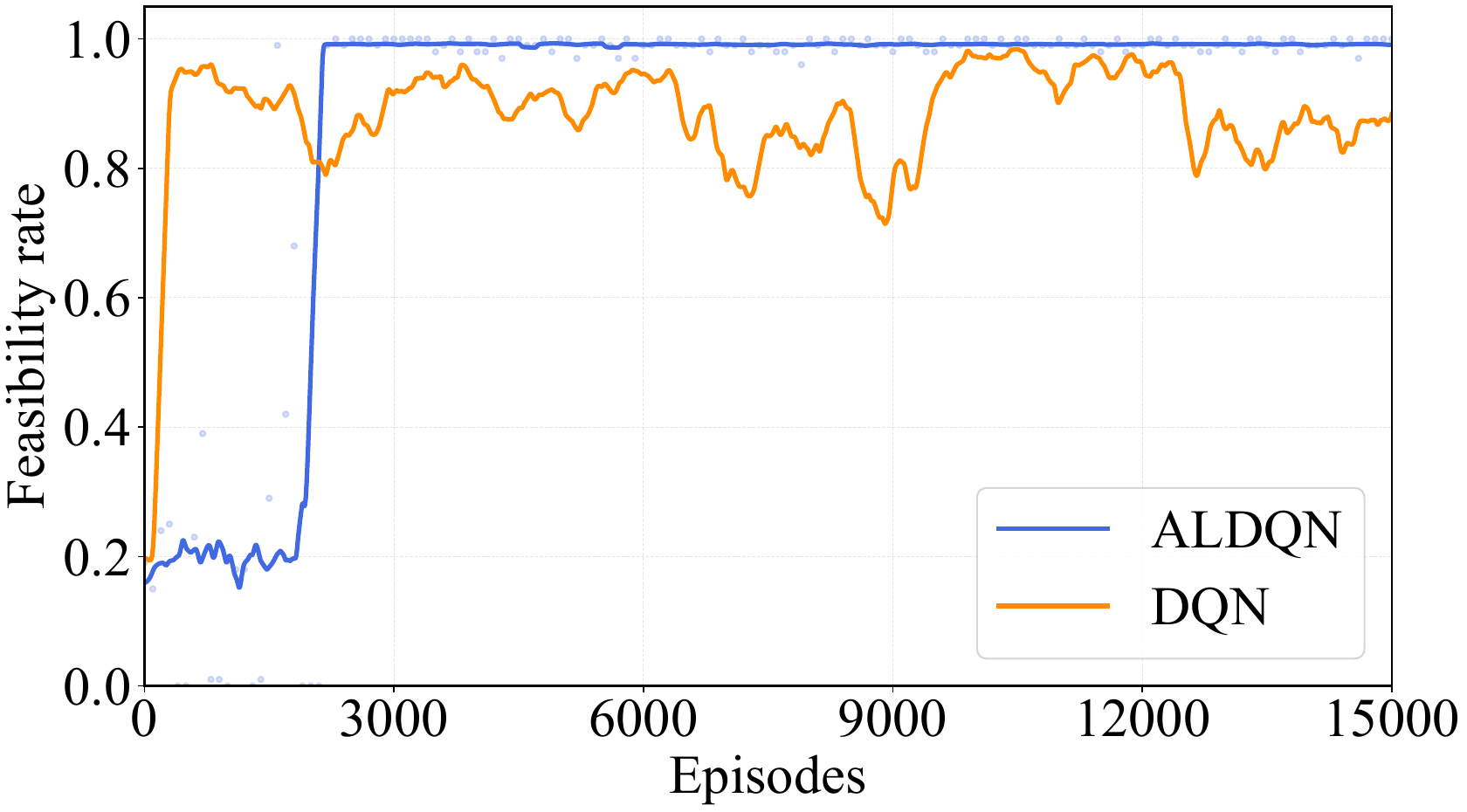}}
\caption{
Training Phase Feasibility Convergence: Safe-Deep Q-Learning achieves rapid convergence to 100\% feasibility rate during training, significantly outperforming conventional alternating optimization. The algorithm demonstrates stable learning progress in handling the nonconvex joint beamforming and phase shift optimization, with feasibility performance stabilizing after approximately 2,500 training episodes while maintaining SINR constraints through appropriate power allocation.
}
\label{fig:ris_feasibility}
\vspace{-10pt}
\end{figure}

Figure \ref{fig:ris_feasibility}  highlights the feasibility convergence. 
The proposed algorithm achieves rapid improvement, reaching near 100\% feasibility by episode 2,500. Conversely, the baseline DQN saturates at a lower feasibility rate (approx. 80-85\%), failing to consistently find solutions that satisfy the nonconvex joint beamforming and phase-shift constraints.

The high feasibility rate is achieved through intelligent coordination between beamforming design and power allocation, where the algorithm learns to allocate sufficient power to overcome channel impairments and interference while optimizing the RIS phase shifts for signal enhancement.

\subsubsection{Testing Performance Evaluation}
To comprehensively evaluate the practical deployment performance of Safe-Deep Q-Learning in RIS-assisted systems, we conducted extensive testing over 100 episodes under realistic operational conditions.
Table \ref{tab:ris_testing_comparison} presents the quantitative testing results. Our approach achieves 100\% feasibility probability, a significant improvement over the baseline's 81\%. 
However, ensuring this strict compliance imposes an operational cost: the proposed method incurs a higher energy cost (9.96W vs 6.75W), representing a 47.5\% increase in power consumption compared to the baseline.
This result clarifies that the baseline 'saves' energy only by frequently failing to meet the required QoS targets.
The proposed Safe-Deep Q-Learning correctly identifies that higher power expenditure is required to guarantee the 10 dB SINR target in this challenging blocked-link environment.

\section{Conclusions}
\label{conclusions}
This article proposed Safe-Deep Q-Learning, a constrained control framework designed to enable safety-critical wireless communications under stochastic dynamics and mixed-integer nonconvexity. By integrating augmented Lagrangian methods with a dual-timescale update mechanism, the proposed algorithm theoretically guarantees convergence while enforcing both instantaneous and long-term constraints. Evaluations on multi-UAV networks and RIS-assisted systems demonstrate that the framework achieves 100\% constraint satisfaction (e.g., zero collisions, strict QoS compliance). Unlike standard RL baselines that maximize raw performance at the risk of system failure, our approach effectively prioritizes operational safety, establishing a robust foundation for trustworthy 6G autonomous networks.

\ifCLASSOPTIONcaptionsoff
  \newpage
\fi

\bibliographystyle{IEEEtran}
\bibliography{ref.bib}

@ARTICLE{10051712,
  author={Peng, Haoran and Wang, Li-Chun},
  journal={IEEE Trans. Wireless Commun.}, 
  title={Energy Harvesting Reconfigurable Intelligent Surface for {UAV} Based on Robust Deep Reinforcement Learning}, 
  year={2023},
  volume={22},
  number={10},
  pages={6826--6838},
  month={Oct.}}

@INPROCEEDINGS{9683088,
  author={Li, Jingqi and Fridovich-Keil, David and Sojoudi, Somayeh and Tomlin, Claire J.},
  booktitle={Proc. IEEE Conf. Decis. Control.}, 
  title={Augmented {Lagrangian} Method for Instantaneously Constrained Reinforcement Learning Problems}, 
  year={2021},
  pages={2982--2989},
month={Dec.},
address = {Austin, TX}
}

@ARTICLE{10443575,
  author={Zhang, Tong and Gou, Yu and Liu, Jun and Song, Shanshan and Yang, Tingting and Cui, Jun-Hong},
  journal={IEEE Trans. Mob. Comput.}, 
  title={Joint Link Scheduling and Power Allocation in Imperfect and Energy-Constrained Underwater Wireless Sensor Networks}, 
  year={2024},
  volume={23},
  number={10},
  pages={9863--9880},
  month={Oct.}}

@ARTICLE{10700695,
  author={Ning, Zhaolong and Ji, Hongjing and Wang, Xiaojie and Ngai, Edith C. H. and Guo, Lei and Liu, Jiangchuan},
  journal={IEEE Trans. Mob. Comput.}, 
  title={Joint Optimization of Data Acquisition and Trajectory Planning for {UAV}-Assisted Wireless Powered Internet of Things}, 
  year={2025},
  volume={24},
  number={2},
  pages={1016--1030},
  month={Feb.}}

@ARTICLE{11087615,
  author={Gunes Reyhan, Berire and Coleri, Sinem},
  journal={IEEE Trans. Commun.}, 
  title={Safe Deep Reinforcement Learning for Resource Allocation With Peak Age of Information Violation Guarantees}, 
  year={2025},
  volume={73},
  number={12},
  pages={14197--14211},
  month={Dec.}}

@ARTICLE{10974474,
  author={Zhu, Bincheng and Huang, Liang and Chi, Kaikai and Alharbi, Abdullah and Yu, Keping and Guizani, Mohsen},
  journal={IEEE Trans. Wireless Commun.}, 
  title={Enhancing Energy Efficiency in Wireless-Powered {MEC} Systems Through {Lyapunov}-Guided Deep Reinforcement Learning}, 
  year={2025},
  volume={24},
  number={9},
  pages={7563--7580},
  month={Sep.}}

@ARTICLE{10529729,
  author={Pourkabirian, Azadeh and Kordafshari, Mohammad Sadegh and Jindal, Anish and Anisi, Mohammad Hossein},
  journal={IEEE Commun. Stand. Mag.}, 
  title={A Vision of {6G} {URLLC}: Physical-Layer Technologies and Enablers}, 
  year={2024},
  volume={8},
  number={2},
  pages={20--27},
  month={Jun.}}

@ARTICLE{10675394,
  author={Gu, Shangding and Yang, Long and Du, Yali and Chen, Guang and Walter, Florian and Wang, Jun and Knoll, Alois},
  journal={IEEE Trans. Pattern Anal. Mach. Intell.}, 
  title={A Review of Safe Reinforcement Learning: Methods, Theories, and Applications}, 
  year={2024},
  volume={46},
  number={12},
  pages={11216--11235},
  month={Dec.}}

@ARTICLE{10269137,
  author={Lyu, Ting and Xu, Haitao and Zhang, Long and Han, Zhu},
  journal={IEEE Internet Things J.}, 
  title={Source Selection and Resource Allocation in Wireless-Powered Relay Networks: An Adaptive Dynamic Programming-Based Approach}, 
  year={2024},
  volume={11},
  number={5},
  pages={8973--8988},
month={Mar.}}

@ARTICLE{9968223,
  author={Rostampoor, Javane and Adve, Raviraj S.},
  journal={IEEE Trans. Commun.}, 
  title={Optimizing Caching in a {C-RAN} With a Hybrid Millimeter-Wave/Microwave Fronthaul Link via Dynamic Programming}, 
  year={2023},
  volume={71},
  number={2},
  pages={923--934},
month={Feb.}}

@ARTICLE{9770190,
  author={Zhang, Yichi and Zhao, Haitao and Wei, Jibo and Zhang, Jiao and Flanagan, Mark F. and Xiong, Jun},
  journal={IEEE Trans. Cog. Commun. Netw.}, 
  title={Context-Based Semantic Communication via Dynamic Programming}, 
  year={2022},
  volume={8},
  number={3},
  pages={1453-1467},
month={Sep.}}

@ARTICLE{10666879,
  author={Lu, Binbin and Wu, Yuan and Qian, Liping and Zhou, Sheng and Zhang, Haixia and Lu, Rongxing},
  journal={IEEE Trans. Netw. Serv. Manag.}, 
  title={Multi-Agent {DRL}-Based Two-Timescale Resource Allocation for Network Slicing in {V2X} Communications}, 
  year={2024},
  volume={21},
  number={6},
  pages={6744--6758},
month={Dec.}}

@ARTICLE{10605908,
  author={Kaneko, Megumi and Dinh, Thi Ha Ly and Shnaiwer, Yousef N. and Kawamura, Kenichi and Murayama, Daisuke and Takatori, Yasushi},
  journal={IEEE Wireless Commun. Lett.}, 
  title={A Multi-Agent Risk-Averse Reinforcement Learning Method for Reliability Enhancement in Sub-{6GHz}/mmWave Mobile Networks}, 
  year={2024},
  volume={13},
  number={10},
  pages={2657--2661},
month={Oct.}}

@ARTICLE{8771176,
  author={Huang, Liang and Bi, Suzhi and Zhang, Ying-Jun Angela},
  journal={IEEE Trans. Mob. Comput.}, 
  title={Deep Reinforcement Learning for Online Computation Offloading in Wireless Powered Mobile-Edge Computing Networks}, 
  year={2020},
  volume={19},
  number={11},
  pages={2581--2593},
month={Nov.}}

@ARTICLE{10636057,
  author={Luo, Chuang and Jiang, Weiheng and Niyato, Dusit and Ding, Zhiguo and Li, Jingfu and Xiong, Zehui},
  journal={IEEE Trans. Wireless Commun.}, 
  title={Optimization and {DRL}-Based Joint Beamforming Design for Active-{RIS} Enabled Cognitive Multicast Systems}, 
  year={2024},
  volume={23},
  number={11},
  pages={16234--16247},
month={Nov.}}

@ARTICLE{10943268,
  author={Long, Yusi and Gong, Shimin and Sun, Sumei and Lee, Gary C. F. and Li, Lanhua and Niyato, Dusit},
  journal={IEEE Trans. Wireless Commun.}, 
  title={Lyapunov-Guided Deep Reinforcement Learning for Semantic-Aware {AoI} Minimization in UAV-Assisted Wireless Networks}, 
  year={2025},
  volume={24},
  number={8},
  pages={6351--6364},
  month={Aug.}}

@ARTICLE{8854339,
  author={Yan, Jia and Bi, Suzhi and Zhang, Ying Jun and Tao, Meixia},
  journal={IEEE Trans. Wireless Commun.}, 
  title={Optimal Task Offloading and Resource Allocation in Mobile-Edge Computing With Inter-User Task Dependency}, 
  year={2020},
  volume={19},
  number={1},
  pages={235--250},
  month={Jan.}}

@ARTICLE{10746553,
  author={Zhong, Yi and Chen, Zhuoling and Han, Tao and Ge, Xiaohu},
  journal={IEEE Trans. Commun.}, 
  title={Spatio-Temporal Interference Correlation: Influence of Deployment Patterns and Traffic Dynamics}, 
  year={2025},
  volume={73},
  number={5},
  pages={3199--3213},
  month={May}}

@ARTICLE{11074719,
  author={Su, Tong and Wu, Tong and Zhao, Junbo and Scaglione, Anna and Xie, Le},
  journal={Proc. IEEE}, 
  title={A Review of Safe Reinforcement Learning Methods for Modern Power Systems}, 
  year={2025},
  volume={113},
  number={3},
  pages={213--255},
  month={Mar.}}

@article{mnih2015human,
  title={Human-level control through deep reinforcement learning},
  author={Mnih, Volodymyr and Kavukcuoglu, Koray and Silver, David and Rusu, Andrei A and Veness, Joel and Bellemare, Marc G and Graves, Alex and Riedmiller, Martin and Fidjeland, Andreas K and Ostrovski, Georg and others},
  journal={nature},
  volume={518},
  number={7540},
  pages={529--533},
  year={2015},
  month={Feb.}
}

@INPROCEEDINGS{10.5555/3454287.3454966,
author = {Paternain, Santiago and Chamon, Luiz F. O. and Calvo-Fullana, Miguel and Ribeiro, Alejandro},
title = {Constrained reinforcement learning has zero duality gap},
year = {2019},
address = {Red Hook, NY},
booktitle = {Proc. 33rd Neural Inf. Process. Syst. (NIPS)},
pages = {7555--7565}
}

@ARTICLE{11017513,
  author={Sun, Ruijin and Cheng, Nan and Li, Changle and Quan, Wei and Zhou, Haibo and Wang, Ying and Zhang, Wei and Shen, Xuemin},
  journal={IEEE Commun. Surv. Tutor.}, 
  title={A Comprehensive Survey of Knowledge-Driven Deep Learning for Intelligent Wireless Network Optimization in {6G}}, 
  year={2026},
  volume={28},
  number={1},
  pages={1099--1135},
  month={May}}

@ARTICLE{10970466,
  author={Chen, Bingwen and Ling, Xintong and Cao, Weihang and Wang, Jiaheng and Ding, Zhi},
  journal={IEEE J. Sel. Areas Commun.}, 
  title={Analysis of Channel Uncertainty in Trusted Wireless Services via Repeated Interactions}, 
  year={2025},
  volume={43},
  number={6},
  pages={2248--2265},
  month={Jun.}}

@ARTICLE{10974617,
  author={Cai, Jianghui and Chen, Bujia and Zhang, Mian and Wen, Jie and Cui, Zhihua and Chen, Jinjun},
  journal={IEEE Internet Things J.}, 
  title={A Fairness-Aware Resource Management Model With Many-Objective Optimization in Uncertain Resource-Constrained Internet of Vehicles}, 
  year={2025},
  volume={12},
  number={14},
  pages={26718--26729},
  month={Jul.}}

@ARTICLE{10877725,
  author={Ben Amar, Eya and Ben Rached, Nadhir and Tempone, Raúl and Alouini, Mohamed-Slim},
  journal={IEEE Trans. Wireless Commun.}, 
  title={Stochastic Differential Equations for Performance Analysis of Wireless Communication Systems}, 
  year={2025},
  volume={24},
  number={5},
  pages={4040--4054},
  month={May}}

@ARTICLE{11031135,
  author={Im, Hyeon-Seong and Kim, Kyu-Yeong and Lee, Si-Hyeon},
  journal={IEEE Trans. Veh. Technol.}, 
  title={Trajectory Optimization for Cellular-Enabled {UAV} With Connectivity and Battery Constraints}, 
  year={2025},
  volume={74},
  number={11},
  pages={17812--17828},
  month={Nov.}}

\end{document}